\newcommand{\cL}{\mathcal{L}}
\newcommand{\cN}{\mathcal{N}}
\newcommand{\cT}{\mathcal{T}}
\renewcommand{\d}{\mathrm{d}}
\newcommand{\R}{\mathbb{R}}
\newcommand{\Z}{\mathbb{Z}}
\newcommand{\bbS}{\mathbb{S}}
\newcommand{\vb}[1]{\mathbf{#1}}
\newcommand{\dd}[2][]{%
  \ensuremath{\d^{#1}{#2}}
}
\newtheorem{theorem}{Theorem}
\newtheorem{definition}{Definition}
\newtheorem{proposition}{Proposition}
\newtheorem{lemma}{Lemma}
\newtheorem{corollary}{Corollary}
\newtheorem{conjecture}{Conjecture}
\theoremstyle{definition}
\def\bbR{\mathbb{R}}
\def\bbN{\mathbb{N}}
\def\bbS{\mathbb{S}}
\def\cA{\mathcal{A}}
\def\cN{\mathcal{N}}
\def\cT{\mathcal{T}}
\def\cL{\mathcal{L}}
\title{Chaotic light scattering \\ around extremal black holes}
\author[1]{Martijn Kluitenberg}
\author[2]{Diederik Roest}
\author[1]{Marcello Seri}
\affil[1]{Bernoulli Institute for Mathematics, Computer Science and Artificial Intelligence, \authorcr\small University of Groningen, The Netherlands,
\authorcr\normalsize \texttt{m.kluitenberg@rug.nl}, \texttt{m.seri@rug.nl}}
\affil[2]{Van Swinderen Institute for Particle Physics and Gravity, \authorcr\small University of Groningen, The Netherlands,
\authorcr\normalsize \texttt{d.roest@rug.nl}}
\date{\today}
\begin{document}

\maketitle
\abstract{We show that the scattering of light in the field of $N\geq 3$ static extremal black holes  is chaotic in the planar case. The relativistic dynamics of such extremal objects reduce to that of a classical Hamiltonian system. Certain values of the dilaton coupling then allow one to apply techniques from symbolic dynamics and classical potential scattering. This results in a lower bound on the topological entropy of order $\log(N-1)$, thus proving the emergence of chaotic scattering for $N \geq 3$ black holes.\\

\noindent{\scriptsize\textbf{Keywords}: symbolic dynamics, chaotic scattering, extremal black holes}\\
{\scriptsize\textbf{MSC 2020}: 37B10, 34C28, 34L25, 83C57}
}

\section{Introduction}

The highly appealing and impressive observations of the shadow of the M87 galaxy's central black hole by the Event Horizon Telescope \cite{EHT} (see the left panel of Figure~\ref{fig:bh}) has triggered a renewed interest in the scattering of light off black holes. In particular, it raises interesting questions regarding the possible emergence of chaos in such scattering processes. We address this problem for multi-black hole configurations from a mathematical perspective, focussing on the case of so-called extremal configurations. 

\begin{figure}
     \centering
     \includegraphics[width=0.47\textwidth,trim={0cm 1.35cm 0cm 1.15cm}, clip]{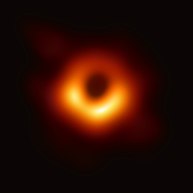}
    \hfill
     \includegraphics[width=0.47\textwidth]{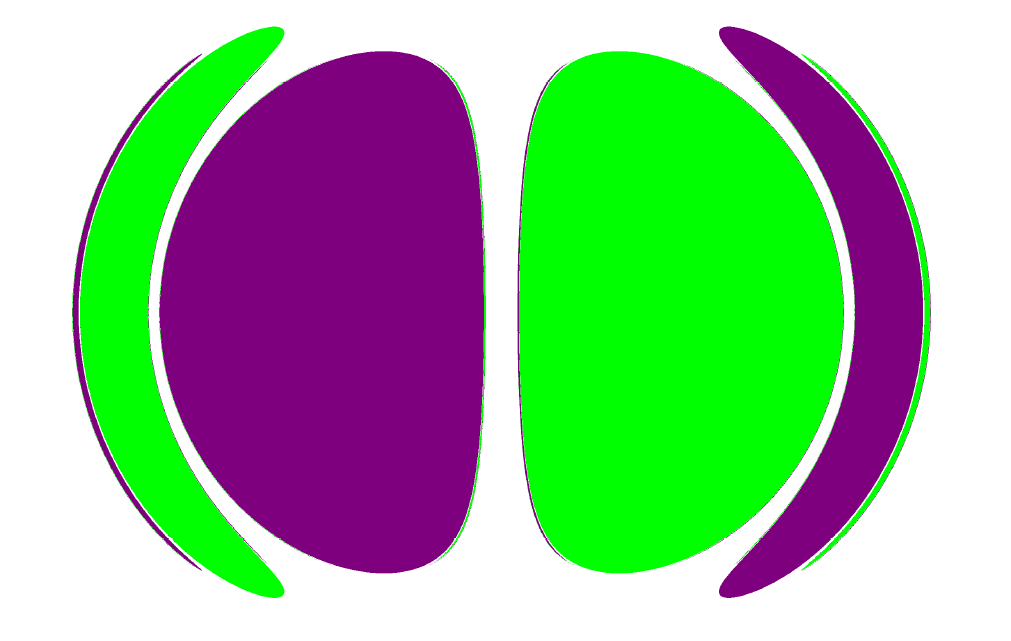}
    \caption{\textit{Left: image of the M87 galaxy's central black hole by the Event Horizon Telescope~\cite{EHT}.  Right: shadow of a pair of extremal black holes of equal mass~\cite{SD16}.}}
    \label{fig:bh}
\end{figure}

Starting in the non-relativistic regime, the scattering of test particles in the classical two-center problem is integrable and hence has no chaos. Moreover, it is well-known that scattering is chaotic in the case of $N \geq 3$ centers. Generic relativistic corrections, however, modify this classical story. For instance, replacing the classical centers with Schwarzschild black holes, the relativistic corrections induce chaotic behaviour in the scattering of light rays already for $N=2$, as already argued for in \cite{Con90} based on numerical calculations.

In this article, instead, we study the scattering of light rays in a space-time with $N$ extremal black holes. These are a special type of charged Reissner-N\"ordstrom black holes (in contrast to neutral Schwarzschild ones). In the extremal limit, where the mass and charge of the black hole are identified, these only exert velocity-dependent forces on each other. As a consequence, such extremal black holes allow for a static configuration. This suggest that the system can be viewed as the relativistic counterpart of the classical $N$-center problem; indeed we will outline precisely how to make such an identification. 

To give more context to the reader, we consider the \emph{Einstein-Maxwell-dilaton theory} (described in Section \ref{sec:EM} or \cite{TTC}). This is a relativistic theory described by the action
\[
S = \int_M \left( R - \partial_\mu \phi \partial^\mu \phi - e^{- 2 a \phi} F_{\mu \nu} F^{\mu \nu} \right) \sqrt{-g} \dd[4]{x},
\]
where $\phi$ is the so-called \emph{dilaton field}, and $a \geq 0$ is a coupling constant. There is a family of solutions to this theory depending on the parameter $a$, which may be interpreted as extremally charged black holes \cite{HH72s}. The metric for these solutions has the form
\begin{align*}
    \dd{s}^2
        &= - U^{-\frac{\alpha}{2}} \dd{t}^2 + U^{\frac{\alpha}{2}} (\dd{x^2} + \dd{y}^2 + \dd{z}^2)
\end{align*}
for a function $U$ to be defined later in \eqref{eq:defnU} and $\alpha$ defined as
\begin{equation}\label{def:alpha}
\alpha = \frac{4}{1 + a^2}.
\end{equation}
Since $a \geq 0$, this parameter takes values in the interval $(0,4]$. In the absence of external forces, light rays will follow the null-geodesics of the metric. Hence, their motion is described by zero-energy solutions of a Hamiltonian system.

There have been interesting formal attempts to show that the scattering of light is chaotic for this extremal relativistic system in the planar two-center. The authors of \cite{TTC} argue that the motion remains regular for all parameters\footnote{For $\alpha = 1$, see Corollary~\ref{cor:equivalence-kepler}, the relativistic problem becomes identical to the classical $N$-center Kepler problem, and hence the dynamics is not chaotic for $N=2$. Although we could not find any specific reference in which this result is explicitly mentioned, it is already implicitly assumed in the paper \cite{TTC}.} $0 \leq \alpha \leq 1$ while for $\alpha \in (1,4]$ the scattering should be chaotic. The same problem was also investigated in \cite{SD16} and extended to the non-planar case; see the right panel of Figure~\ref{fig:bh}. The historical claims are summarized in Table~\ref{tab:comparison} where we also relate the parameter $\alpha$ used in this paper with the more commonly used parameter $a$. 

\begin{table}[hb]
\centering
\begin{tabular}{@{}llllll@{}}
\toprule
Reference & Claim & Black hole  & $a$     & $\alpha$     & Configuration       \\ \midrule
\cite{Con90} & Chaotic & Neutral & $a = 0$ & $\alpha = 4$ & Planar \\
\begin{tabular}[c]{@{}l@{}}\cite{TTC} \\ ~\end{tabular} &
  \begin{tabular}[c]{@{}l@{}}Chaotic \\ Regular\end{tabular} &
\begin{tabular}[c]{@{}l@{}}Extremal \\ ~\end{tabular} &
  \begin{tabular}[c]{@{}l@{}}$0 \leq a < \sqrt{3}$\\ $a \geq \sqrt{3}$\end{tabular} &
  \begin{tabular}[c]{@{}l@{}}$1 < \alpha \leq 4$\\ $\alpha \leq 1$\end{tabular} &
  \begin{tabular}[c]{@{}l@{}} Planar \\ ~\end{tabular} \\
\cite{SD16}  & Chaotic & Extremal & $a = 0$ & $\alpha=4$   & Non-planar     \\ \bottomrule
\end{tabular}
\caption{Comparison of related claims in the physical literature regarding the scattering off black hole binary systems.}
\label{tab:comparison}
\end{table}

The main aim of this manuscript is to provide rigorous grounds for the claim of chaotic scattering for $N\geq 3$ and specific dilaton couplings.
\begin{theorem}\label{thm:main}
    Let $N \geq 3$ denote the number of extremal black holes in the planar Einstein-Maxwell-dilation theory as defined in Section~\ref{sec:EMD}, and assume that no combination of three such black holes is collinear.
    Then light scattering as defined in Section~\ref{sec:EM-light} is chaotic for all 
    \[
    \alpha_n = \frac{2n}{n+1} \in  [1,2), \quad n\in\bbN, 
    \]
    in the following sense: the relativistic flow is topologically semi-conjugate to a chaotic shift map on sequences on the alphabet $\cA=\{1,\ldots,N\}$ where the same symbols cannot appear twice in a row.
\end{theorem}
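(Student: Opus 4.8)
The plan is to realise the relativistic null-geodesic flow as a potential-scattering problem on the plane and then apply the symbolic-dynamics machinery developed for the classical $N$-center problem. First I would use the reduction announced in the introduction (Sections~\ref{sec:EMD}--\ref{sec:EM-light}): for zero-energy null geodesics the Hamiltonian system associated with the metric $\dd{s}^2 = -U^{-\alpha/2}\dd{t}^2 + U^{\alpha/2}(\dd{x}^2+\dd{y}^2+\dd{z}^2)$, restricted to the plane containing the $N$ singularities, becomes (after a time reparametrisation) a mechanical system of the form $H = \tfrac12 p^2 + V(q)$ with $V$ built from $U^{\pm\alpha/2}$. The function $U$ is harmonic with logarithmic-type singularities at the black-hole positions $\vb{c}_1,\dots,\vb{c}_N$, so $V$ has $N$ attracting singularities and decays at infinity. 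For the distinguished values $\alpha_n = 2n/(n+1)$ I expect the exponents to become rational in a way that makes the singularities of the Hamiltonian \emph{branch points} of finite order $n+1$; passing to the associated $(n+1)$-sheeted Riemann surface (a Levi-Civita / Bohlin-type regularisation near each center) turns each singularity into a regular point of a smooth flow. This is the step where the special arithmetic of $\alpha_n$ is used, and I would isolate it as a lemma: \emph{for $\alpha = \alpha_n$ the flow extends to a smooth complete flow on a branched cover, with the $N$ centers becoming regular fixed-energy points.}

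Next I would set up the symbolic coding. Fix a small disk $D_i$ around each center $\vb{c}_i$; because the centers are not pairwise collinear with a third one, for any ordered pair $i \neq j$ there is a "bridge" region in which an orbit leaving a neighborhood of $\vb{c}_i$ can be steered toward $\vb{c}_j$. The aim is to show that for every admissible sequence $(s_k)_{k\in\bbZ}$ over $\cA = \{1,\dots,N\}$ with $s_{k+1}\neq s_k$ there is a zero-energy orbit that visits $D_{s_0}, D_{s_1}, \dots$ in that order. I would obtain such orbits by a variational (minimizing) argument: among curves that thread the prescribed sequence of disks, minimize the Jacobi–Maupertuis length; the minimizer is a genuine zero-energy trajectory provided it avoids collisions, and the regularised picture from the lemma handles the near-collision curves and guarantees the minimizer stays in the right homotopy/threading class. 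This is essentially the Bolotin–Negrini / Klein–Knauf style construction for the $N$-center problem adapted to our $V$; the non-collinearity hypothesis is exactly what is needed so that no three disks line up and obstruct the threading, and so that distinct admissible itineraries are realised by distinct orbits. Continuity of the minimizers in the itinerary, together with a cone/compactness estimate on a suitable Poincaré section transverse to the family, then upgrades "realisation" to a genuine semi-conjugacy: the first-return map to the union of sections (or the suspension thereof, matched to the relativistic flow by the time reparametrisation) projects onto the full shift on admissible sequences.

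The topological-entropy statement is then immediate: the subshift of finite type on $\cA$ with the single forbidden pattern "$ii$" has transition matrix $J - I$ (all-ones minus identity) on $N$ symbols, whose spectral radius is $N-1$, so its topological entropy is $\log(N-1)$; semi-conjugacy gives the relativistic flow topological entropy at least $\log(N-1)$, and in particular positive entropy for $N \geq 3$, which is the asserted chaos. I would close by noting that for $N=2$ this lower bound is vacuous ($\log 1 = 0$), consistent with the integrability of the two-center case at $\alpha=1$ noted in the introduction.

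**Main obstacle.** I expect the crux to be the regularisation lemma and, intertwined with it, the collision-avoidance in the variational step: showing that the length-minimizing curve in a prescribed threading class is free of collisions with the centers (other than the intended near-passages) and that near a center the minimizer corresponds, on the branched cover, to a smooth arc of the extended flow rather than to a "collision-ejection" chord. Controlling this uniformly in the itinerary — so that the family of orbits depends continuously on $(s_k)$ and the Poincaré-section return map is well-defined and hyperbolic enough to carry the shift — is the technical heart of the argument; the rest is an adaptation of known $N$-center symbolic dynamics once the Hamiltonian has been brought to mechanical form.
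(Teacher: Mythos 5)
Your plan follows the paper only up to the reduction to a classical Hamiltonian at energy $\tfrac12$ (Proposition~\ref{prop:eq-to-cl-ham}); after that you take a genuinely different route. The paper Taylor-expands the potential near each center, cuts it into compactly supported singular pieces plus a smooth remainder, shows each piece is regularizable, nontrapping and has nonzero scattering degree $\deg(E)=-n$ for $\alpha_n=\tfrac{2n}{n+1}$ (Theorem~\ref{thm:deg_known} plus an explicit scattering-angle computation), and then obtains the semi-conjugacy in one stroke from Knauf's degree-based symbolic-dynamics theorem (Theorem~\ref{thm:symb_dyn_deg}, \cite{Kna99}), where the non-collinearity hypothesis is used only to make the shrunken supports non-shadowing. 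Your Jacobi-metric/branched-cover variational construction is precisely the alternative regularization strategy, in the spirit of \cite{KK92}, that the authors explicitly mention and set aside as lengthier and more technical; it is a legitimate known strategy (Bolotin-type threading arguments for $N$-center problems) and, if carried out, could even reach cases the degree argument does not, but as written it leaves the hardest steps unproven.

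Two concrete gaps. First, the local analysis is off: $U$ has $1/r$ singularities (not logarithmic ones), and what governs the dynamics is the expansion of $U^{\alpha}$, which near each center produces \emph{two} singular terms, $-C/r^{\alpha}-D/r^{\alpha-1}$ with $\alpha-1\in(0,1)$, plus a smooth part. Your regularization lemma (``singularities become branch points of order $n+1$'') is immediate only for the pure power $-C/r^{\alpha_n}$; you must show that the subleading singular term and the smooth perturbation neither obstruct the extension of the flow nor change the near-collision deflection. This is exactly what the paper verifies by adapting \cite[Proposition 4.1]{KK08} and computing the $\ell\to0$ limit of the scattering angle, which turns out to be independent of $D$; your proposal does not address it. Second, the core of your symbolic-dynamics step --- collision avoidance of Jacobi-length minimizers in a prescribed threading class, continuity of the minimizers in the itinerary, and the upgrade from ``every admissible itinerary is realized'' to a topological semi-conjugacy --- is only announced (and flagged by you as the main obstacle), with no indication of how it would be proved for these non-power-law potentials. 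So, while the architecture is coherent and the entropy count $\log(N-1)$ is correct, the proposal as it stands is an outline of the harder of the two available proofs, with its decisive lemmas missing; the paper's degree-theoretic shortcut avoids the variational machinery entirely.
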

\noindent
As a consequence, the scattering system inherits chaotic properties from the shift map.

Extending the result to the planar two-center case, as well as to non-planar configurations (see also \cite{SD16}), is technically more involved and will be postponed for future research. We, however, believe that the result keeps holding true provided that $\alpha \neq 1$.
\begin{conjecture}
    Theorem \ref{thm:main} holds for all $\alpha\in(1,2)$, $N\geq2$ and in non-planar configurations and higher space dimensions.
\end{conjecture}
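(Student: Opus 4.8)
The plan is to keep the dictionary between light rays and classical mechanics used for Theorem~\ref{thm:main}, but to replace the branched-cover reduction -- which exists only at the special values $\alpha_n$ -- by the general theory of chaotic $N$-centre scattering. By Fermat's principle the null geodesics of the static metric are the geodesics of the optical metric $U^{\alpha}\delta_{ij}$, and by the Jacobi--Maupertuis correspondence these are the trajectories of $H=\tfrac12|p|^2+V$ with $V=\tfrac12(1-U^{\alpha})$ at energy $E=\tfrac12$. Since $U\sim m_i/|x-x_i|$ near the $i$-th centre, $V$ has an attractive singularity of order $\alpha$ there and decays at infinity; for $\alpha\in(1,2)$ this singularity is \emph{sub-centrifugal}, so orbits of non-zero angular momentum avoid collisions and positive-energy scattering is well defined. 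The semiconjugacy construction uses only the order of these singularities, the decay of $V$, and the multiply-connected topology of the punctured plane, not the precise Keplerian form of $V$; I would therefore reproduce the variational construction (in the spirit of Bolotin--Negrini and Klein--Knauf) for $V=\tfrac12(1-U^{\alpha})$ directly, building for each admissible finite word a trajectory that performs the prescribed close approaches by minimising the Maupertuis action in the corresponding homotopy class, and passing to bi-infinite words by compactness. This would establish Theorem~\ref{thm:main} for \emph{every} $\alpha\in(1,2)$ in the planar $N\geq3$ case.

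For the non-planar and higher-dimensional cases the punctured configuration space is simply connected, so the homotopy coding fails and one instead codes by the sequence of centres approached, following the spatial $N$-centre results (again Bolotin--Negrini and Klein--Knauf, consistent with the non-planar numerics of \cite{SD16}). The near-collision estimates and the block construction of orbits carry over once ``no three collinear'' is replaced by ``the centres do not lie on a common line,'' so that the trapped set is genuinely three-dimensional; this reproduces the no-repeat subshift and the $\log(N-1)$ entropy bound in arbitrary dimension.

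The two-centre case $N=2$ is the genuinely new point. Here the no-repeat subshift on $\{1,2\}$ is trivial, so chaos cannot come from the itinerary coding and must instead arise from the breaking of integrability: at $\alpha=1$ the system is the Keplerian two-centre problem (Corollary~\ref{cor:equivalence-kepler}), which is Liouville-integrable, which is exactly why that value is excluded. I would treat $\alpha=1$ as the integrable baseline, exhibit a hyperbolic periodic orbit together with its homoclinic (or heteroclinic) connection in the separating prolate-spheroidal coordinates, and run a Melnikov computation with $\alpha$ as perturbation parameter, the perturbation being $\partial_\alpha V=-\tfrac12 U^{\alpha}\log U$. Simple zeros of the Melnikov function would give transverse homoclinic points, and Smale--Birkhoff would then yield a horseshoe, positive topological entropy, and a non-trivial subshift on an enriched alphabet that records, for instance, the number of half-turns between close approaches.

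The main obstacle is concentrated in this last step. Evaluating the Melnikov integral at $E=\tfrac12$ for the singular, non-homogeneous potential $\tfrac12(1-U^{\alpha})$ is delicate: the unperturbed separatrix must be parametrised explicitly (feasible at $\alpha=1$), and the resulting integral over $U^{\alpha}\log U$ must be shown to possess simple zeros -- an estimate unlikely to be uniform as $\alpha\to1^{+}$, where the zeros may collide, or as $\alpha\to2^{-}$, where the separatrix degenerates at the centrifugal threshold. Melnikov theory would in the first instance only give chaos in a punctured neighbourhood of $\alpha=1$; covering all of $(1,2)$ would additionally require excluding global bifurcations that could destroy the horseshoe, for which continuation from the already-established values $\alpha_n$ together with the structural stability of hyperbolic sets is the most promising route. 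A secondary, more routine task is to make the constants in the first two steps uniform in $\alpha$ on compact subintervals, so that the semiconjugacy survives up to, but not including, $\alpha=2$.
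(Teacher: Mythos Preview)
The statement you are attempting to prove is labelled a \emph{Conjecture} in the paper, and the paper contains no proof of it; the authors explicitly write that ``extending the result to the planar two-center case, as well as to non-planar configurations\ldots is technically more involved and will be postponed for future research.'' So there is no ``paper's own proof'' to compare against, and your submission is not a proof either: it is a programme, and you yourself flag the main gap in the final paragraphs.

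That said, it is worth contrasting your proposed strategy with the directions the paper itself indicates. For the extension from the discrete values $\alpha_n$ to all $\alpha\in(1,2)$ with $N\geq 3$, the paper suggests replacing the Moser-type regularization by the Jacobi-metric/Riemann-surface approach of \cite{KK92}; your variational route via Bolotin--Negrini and Klein--Knauf is in the same family and is plausible, though you would need to verify that the near-collision action estimates go through for the non-homogeneous singularity $-C/r^{\alpha}-D/r^{\alpha-1}$ rather than a pure power, and that the long-range tail of $V$ (which is only $O(1/r)$ at infinity, hence genuinely long range) does not spoil the compactness step when passing to bi-infinite words. For $N=2$, the paper's hint is different from yours: it points to the \emph{negative curvature of the Jacobi metric} as the mechanism, and to the fact that $\Delta\phi>\pi$ for $\alpha>1$ allows qualitatively distinct orbit types with the same itinerary $\ldots 1212\ldots$, suggesting an enriched coding rather than a perturbative argument. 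Your Melnikov proposal treats $\alpha=1$ as the integrable base and perturbs; this is reasonable but, as you note, at best yields chaos for $\alpha$ in a punctured neighbourhood of $1$, and the gluing to the full interval $(1,2)$ via structural stability from the $\alpha_n$ is not justified since the $\alpha_n$ accumulate only at $2$, leaving an a priori gap between the Melnikov neighbourhood of $1$ and the first $\alpha_n$ where you have anything. Neither the paper's hints nor your outline constitutes a proof; the conjecture remains open.
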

\noindent
One can readily observe that the range of values of $\alpha$ covered by previous claims on the $N=2$ case is much larger than our conjecture. It is worth to have a brief discussion of why that is the case. \\ 

Our aim for this manuscript is to show that we can apply ideas from classical hamiltonian scattering theory to the dynamics of light in the field of extremal Reisser-N\"ordstrom relativistic black holes, even though the latter are relativistic objects.
To this end, we need to ensure that the classical scattering operator, a map between the far past asymptotic of the trajectories to their far future, is well defined. This will depend both on the behaviour of the classical potential at infinity and its behaviour around the singularities. 

The decay of the potential at infinity heavily influences how we define the asymptotic velocities and the choice of the reference hamiltonian for the scattering comparison. Fortunately for us, scattering theory of long-range potentials is well developed~\cite{KnaufCM, Fejoz_2021} and our potential falls in the class.
However, to be able to compare the asympotic behaviour of the flows, the motion needs to be well defined at all times.
While this is not a problem for regular potentials, some work is needed to check regularizability of the singular ones, in the sense that we need to prove that they have a unique completion to all times.

In the case of $\alpha \geq 2$, it is a well known fact that the measure of initial conditions leading to a collision with the singularities in finite-time is positive, which greatly complicates the analysis and is our primary reason to avoid this range.
Moreover, while the flow of regular long-range regular potentials is always complete, our approach to regularization of the flow for our singular potentials will introduce a limitation on the allowed values of the parameter $\alpha\in (1,2)$.

We believe that this limitation can be overcome by using a different regularization technique, on the line of \cite{KK92}, using Jacobi metric to lift the problem to study a covering by Riemann surfaces. This ends up being a lengthier and more technical approach and we decided to postpone it for future research.
It is worth mentioning, though, that that the curvature of the Jacobi metric for the two-center relativistic planar case remains negative, which is a strong indicator that the result should hold true also for $N=2$ and hints at the fact that this could be a promising approach.\\

Of course, this paper is only scratching the surface of the potentials from cross-pollination between Hamiltonian mechanics one the one hand, and certain relativistic theories on the other. A look at the physical literature around extremal black holes shows interesting investigations of the same system in a number of different directions, see e.g. \cite{Shiraishi_1993,Assump_o_2018}, and there are many more questions and problems that are omitted here but would warrant a further look. The case of massive particles and the non-planar scattering are just the two most immediate ones.
In this respect, we hope for this work to further stimulate fruitful collaborations between mathematicians and physicists.\\

The paper is structured as follows. In Section \ref{sec:proof}, we prove Theorem \ref{thm:main} and we compute a bound on the topological entropy. The main idea of the proof is to rewrite the Hamiltonian for geodesic motion in spacetime as an autonomous Hamiltonian of the form ``kinetic energy $+$ potential energy'' to be able to apply technical tools from \emph{classical mechanics} and chaotic scattering. The required background in general relativity and classical scattering is reviewed in Sections $\ref{sec:EM}$ and $\ref{sec:scattering}$ respectively. \\

\textbf{Acknowledgements}.
We would like to thank Andreas Knauf and Nikolay Martynchuk for useful discussions in the development of this work.
MS research is supported by the NWO project 613.009.10 and MK and MS research is supported by NWO project OCENW.KLEIN.375.

\section{Relativistic models of $N$-center problems}\label{sec:EM}

\subsection{Review of General Relativity}

In this section we recall some of the basics of general relativity (GR) in order to understand the construction of the $N$-center solutions that we are interested in. As a general reference for this section, we refer the reader to \cite{WaldGR}. In GR, spacetime is a 4-dimensional manifold $M$ equipped with a Lorentzian metric $g_{\mu \nu}$ which we assume has signature $(-1,1,1,1)$. The simplest example is $\R^4$ with the metric $\eta = \mathrm{diag}(-1,1,1,1)$ or, analogously,
\[
\dd{s}^2 = -\dd{t}^2 + \dd{x}^2 + \dd{y}^2 + \dd{z}^2.
\]
The length of a parametrized curve $q: (a,b) \to M$ is given by
\[
\ell(q) = \int_a^b \| \dot{q} (\lambda) \| \dd{\lambda} = \int_a^b \sqrt{g_{\mu \nu} (q (\lambda)) \dot{q}^\mu (\lambda) \dot{q}^\nu(\lambda) } \dd{\lambda},
\]
which is invariant under reparametrization. Note that $\ell(q)$ may be zero, even if it connects two different points of spacetime; massive particles travel along curves with $\|\dot{q}(\lambda)\| > 0$ while for massless particles $\|\dot{q}(\lambda)\| =0$.  \\

The metric $g_{\mu \nu}$ is a dynamical variable, subject to equations of motion. These follow from a variational principle that minimizes the so-called \emph{Einstein-Hilbert action}
\[
S = \int_M \sqrt{-g} R \dd[4]{x},
\]
where $g = \det(g_{\mu\nu})$ (which is negative because of the signature), $R$ is the Ricci scalar associated to the metric, and, to simplify the notation, we have set the factor $\frac{c^3}{16 \pi G}$ to $1$.
The corresponding equations of motion are 
\begin{equation}\label{eq:fieldeqns}
    G_{\mu \nu} := R_{\mu \nu} - \frac{1}{2} R g_{\mu \nu} = 0,
\end{equation}
which are referred to as the \textit{(free space) Einstein field equations}. A \textit{solution} to GR is any metric $g_{\mu \nu}$ obeying (\ref{eq:fieldeqns}). \\

We now want to include other interactions into GR, such as electromagnetism. This is done by adding new terms to the Lagrangian density $\cL = \sqrt{-g}R$ which occurs in the action. In case of electromagnetism, we add a new dynamical variable $A_\mu$, which is called the 4-potential. We moreover define the field strength tensor
\begin{equation}
    F_{\mu \nu} = \nabla_\mu A_\nu - \nabla_\nu A_\mu.
\end{equation}
The action is then expressed in terms of $F_{\mu \nu}$ as
\begin{equation}
    S = \int_M \left(  R - \frac{1}{4} F_{\mu \nu} F^{\mu \nu} \right) \sqrt{-g} \dd[4]{x},
\end{equation}
A \textit{solution} to the above theory should now be stationary w.r.t. variations of both $g_{\mu \nu}$ and $A_\mu$. This yields two coupled equations
\begin{align}
\begin{cases}
    G_{\mu \nu} = \frac{1}{2} T_{\mu \nu}\\
    \nabla_\mu F^{\mu \nu} = 0
    \end{cases},
\end{align}
where $T_{\mu \nu}$ is the electromagnetic stress-energy tensor
\begin{equation}
    T_{\mu \nu} = g^{\rho \sigma} F_{\mu \rho} F_{\nu \sigma} - \frac{1}{4} g_{\mu \nu} F^{\rho \sigma} F_{\rho \sigma}.
\end{equation}
The two equations above are called the \textit{Einstein-Maxwell equations}.

\subsection{The Majumdar-Papapetrou solution}

We look for special solutions to the Einstein-Maxwell equations on a subset $\Omega \subseteq \bbR^4$. Consider a metric of the form
\[
\dd{s}^2 = - \frac{1}{U^2} \dd{t}^2 + U^2 (\dd{x}^2 + \dd{y}^2 + \dd{z}^2),
\]
where $U\equiv U(x,y,z)$ is a function of the spatial variables $x,y,z$ only. Taking
\[
A = \frac{1}{U} \dd{t},
\]
we obtain a solution to the Einstein-Maxwell equations if $U$ is a harmonic function.
This is very convenient, since the Laplace equation $\Delta U = 0$ is linear, while the Einstein field equations are fully nonlinear. In this paper we focus on the case 
\begin{equation}\label{eq:RNU}
    U = 1 + \sum_{i = 1}^N \frac{M_i}{r_i},
\end{equation}
with $r_i = \|\vb{x} - \vb{s}_i\|$ the distance to the fixed $i$\textsuperscript{th} center $\vb{s}_i$ in space. \\

The solution above was first derived by Majumdar \cite{Maj47} and Papapetrou \cite{Pap45}. It was later interpreted by Hartle and Hawking \cite{HH72s} as a spacetime with $N$ \emph{extremally charged} black holes; the different centers  have \emph{charges} $Q_i = M_i$.  Due to the extremal charges, we have a cancellation between the electrostatic repulsion and the gravitational attraction (more precisely, that there are only velocity-dependent forces), allowing for a static configuration of multiple black holes. Finally, it is impossible to add further charge without introducing e.g.~naked singularities; a discussion of further physical properties of the black holes can be found in \cite{kluitenberg2021chaotic}.

\subsection{Including dilaton charge}\label{sec:EMD}
We can further generalize the previous solution by including a scalar field $\phi$, called the dilaton field. Ignoring factors of $c$ and $G$, the action can be written as \cite{TTC}
\[
S = \int_M \left( R - \partial_\mu \phi \partial^\mu \phi - e^{- 2 a \phi} F_{\mu \nu} F^{\mu \nu} \right) \sqrt{-g} \dd[4]{x},
\]
where $a \geq 0$ is a coupling constant. We have omitted the dimensional constant in front of $R$ in order to be consistent with \cite{TTC}. The term $\partial_\mu \phi \partial^\mu \phi$ can be interpreted as kinetic energy. From the action, we see that the particles corresponding to $\phi$ are massless and have spin 0. \\

A \textit{solution} to the above theory then has to be stationary w.r.t. variations of $g_{\mu \nu}, A_\mu$ and $\phi$, giving three coupled equations. Since we do not need the explicit forms of the equations, we have chosen to omit them here. With $\alpha$ given by \eqref{def:alpha}, the previous solution generalizes to
\begin{equation}\label{eq:metrica}
    \dd{s}^2 = - U^{-\frac{\alpha}{2}} \dd{t}^2 + U^{\frac{\alpha}{2}} (\dd{x^2} + \dd{y}^2 + \dd{z}^2)
\end{equation}
together with 
\[
A = \frac{2}{\sqrt{\alpha}\; U} \dd{t}
\quad
\mbox{and}
\quad
e^{- \phi} = U^{\frac{a}{1 + a^2}}.
\]
The function $U$ is modified to
\begin{equation}\label{eq:defnU}
    U = 1 + \sum_{i = 1}^N \frac{4 M_i}{\alpha r_i},
\end{equation}
which reduces to the previous solution in case $\alpha = 4$, i.e. $a = 0$. \\

Like before, we interpret this solution as a configuration of \emph{extremal black holes}, satisfying the relation $Q_i^2 = M_i^2 + \Sigma_i^2$ ensuring the cancellation of velocity-independent forces, where $\Sigma_i$ is the dilaton charge. The Einstein-Maxwell-dilaton theory has several interesting special cases. One example is the case $\alpha = 1$, which is related to a reduction of Einstein gravity from 5 to 4 spacetime dimensions. This is referred to as Kaluza-Klein theory \cite{ravndal2004scalar}. As we have mentioned, $\alpha = 1$ is also special because in this case the motion of light becomes integrable. Other interesting limits are discussed in \cite{TTC, kluitenberg2021chaotic}.

\subsection{Motion of light}\label{sec:EM-light}

We are interested in studying the motion of light rays in a geometry described by the metric (\ref{eq:metrica}). Note that we only need to study $g_{\mu \nu}$, since photons do not have electromagnetic or dilaton charge. In the absence of external forces, particles will follow the geodesics of $g_{\mu \nu}$. These are determined as critical points of the length functional
\[
\ell(q) = \int_a^b \| \dot{q}(\lambda) \| \dd{\lambda}. 
\]
The corresponding Lagrangian is given by
\begin{equation}
    L = \frac{1}{2} g_{\mu \nu} (q (\lambda)) \dot{q}^\mu (\lambda) \dot{q}^\nu (\lambda). 
\end{equation}
Moreover, since photons are massless, we should have
    $p_\mu p^\mu = m^2 = 0$,
where $p_\mu = g_{\mu \nu} \dot{q}^\nu$
is the generalized momentum conjugate to $q^\mu$.
As an immediate consequence, light rays move along solutions to the Euler-Lagrange equations with $L = 0$. \\

We will now study the Hamiltonian formulation of the problem.
The Hamiltonian can be computed via Legendre transform and is given by
\begin{equation}
    H = \frac{1}{2} g^{\mu \nu} (q) p_\mu p_\nu = \frac{1}{2} \left( -U^{\frac{\alpha}{2}}(\vb{x}) p_t^2 + U^{- \frac{\alpha}{2}}(\vb{x}) (p_x^2 + p_y^2 + p_z^2) \right),
\end{equation}
where $g^{\mu \nu}$ with upper indices denotes the inverse of the metric tensor. The motion of photons takes place on the $\{ H = 0\}$ level set. Since the function $U$ depends only on the spatial variables, $t$ is a cyclic variable and $p_t$ is conserved. By reparametrizing the trajectory, we may set $p_t = -1$ without loss of generality. \\

Since we are on the level set $\{ H = 0 \},$ we can multiply the Hamiltonian by the nowhere vanishing function $U^{\frac{\alpha}{2}},$ which is equivalent to doing a time rescaling. We end up with the new Hamiltonian
\begin{equation}\label{eq:H'full}
    \widetilde H(x,y,z,p_x,p_y,p_z) = - \frac{1}{2} U^{\alpha}(\vb{x}) + \frac{1}{2} (p_x^2 + p_y^2 + p_z^2) + \frac{1}{2},
\end{equation}
which is in the form ``kinetic energy'' + ``potential energy'' with potential
\begin{equation}\label{eq:potential}
    V(\vb{x}) = \frac{1}{2} - \frac{1}{2} U^{\alpha}(\vb{x}) =  \frac{1}{2} - \frac{1}{2} \left( 1 + \sum_{i = 1}^N \frac{\widetilde  M_i}{r_i(\vb{x})} \right)^\alpha,
\end{equation}
where $\widetilde  M_i := \frac{4}{\alpha}M_i$, $r_i = r_i(\vb{x}) = \|\vb{x} - \vb{s}_i\|$.
The term $\frac{1}{2}$ is added in $\widetilde  H$ to make the potential $V(\vb{x})$ go to zero when $|\vb{x}| \to \infty.$ 

We consider the scattering of particles in a system with Hamiltonian (\ref{eq:H'}) at energy $E = \frac{1}{2}$. Notice that for $\alpha = 1$, we just obtain the Kepler Hamiltonian. The range of physically interesting parameters is $0 < \alpha \leq 4$. As announced in the introduction, we will restrict our attention to the planar case, i.e. we assume that all the centers are located in the $(x,z)$-plane. We can then set $p_y = 0$ without loss of generality, and we only have two degrees of freedom left. The Hamiltonian becomes:
\begin{equation}\label{eq:H'}
    \widetilde H (x,z,p_x,p_z) = \frac{1}{2} (p_x^2 + p_z^2) + V(x,z),
\end{equation}
with
$V$ defined in \eqref{eq:potential}. \\

We have just proven the following statement:
\begin{proposition}\label{prop:eq-to-cl-ham}
    Motion of light in the Einstein-Maxwell-dilaton theory under the influence of $N$ extremal black holes is described, up to a time-reparametrization, by the evolution of the classical Hamiltonian $\widetilde{H}$ from \eqref{eq:H'full} at energy $\frac12$.
    The planar case, further reduces to the classical Hamiltonian \eqref{eq:H'}.
\end{proposition}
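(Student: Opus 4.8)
The plan is to make rigorous the chain of reductions already sketched above, which is essentially kinematic in nature.

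\emph{Step 1 (Hamiltonian form of null geodesics).} Parametrize a light ray by an affine parameter $\lambda$ and take the geodesic Lagrangian $L = \tfrac12 g_{\mu\nu}(q)\,\dot q^\mu\dot q^\nu$ of the metric \eqref{eq:metrica}; its Legendre transform is $H = \tfrac12 g^{\mu\nu}(q)\,p_\mu p_\nu$ with $g^{\mu\nu} = \mathrm{diag}\!\left(-U^{\alpha/2},\,U^{-\alpha/2},\,U^{-\alpha/2},\,U^{-\alpha/2}\right)$. The massless condition $p_\mu p^\mu = 0$ is exactly $H = 0$, so light rays are the integral curves of $X_H$ lying on $\{H=0\}$. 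Since $U = U(\vb x)$ only, $t$ is cyclic and $p_t$ is conserved; on $\{H=0\}$ one has $|\vb p|^2 = U^{\alpha}p_t^2$, so $p_t \neq 0$ for any non-constant ray, and an affine rescaling of $\lambda$ together with a choice of time orientation normalizes $p_t = -1$. This leaves $H = \tfrac12\bigl(-U^{\alpha/2}p_t^2 + U^{-\alpha/2}|\vb p|^2\bigr)\big|_{p_t=-1}$ on the zero level set.

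\emph{Step 2 (conformal time reparametrization).} The crux is the standard Maupertuis/Jacobi observation that on a fixed energy level a Hamiltonian may be multiplied by any nowhere-vanishing function without changing the oriented trajectories: $X_{U^{\alpha/2}H} = U^{\alpha/2}X_H + H\,X_{U^{\alpha/2}}$, which on $\{H=0\}$ reduces to $U^{\alpha/2}X_H$. Hence introducing a new parameter $\tau$ with $\d\lambda/\d\tau = U^{\alpha/2}(\vb x)>0$ turns integral curves of $X_H$ into integral curves of $X_{U^{\alpha/2}H}$ and back, bijectively (the rescaling is strictly monotone). Computing, $U^{\alpha/2}H = \tfrac12|\vb p|^2 - \tfrac12 U^{\alpha}$, and adding the constant $\tfrac12$ (which does not affect Hamilton's equations) gives precisely $\widetilde H$ of \eqref{eq:H'full}, of mechanical type with potential $V$ as in \eqref{eq:potential}; since $U^{\alpha}\to 1$ as $|\vb x|\to\infty$, indeed $V\to 0$. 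On $\{H=0\}$ one has $\widetilde H = U^{\alpha/2}\cdot 0 + \tfrac12 = \tfrac12$, so the relevant energy level is $\{\widetilde H = \tfrac12\}$.

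\emph{Step 3 (planar reduction).} If all $\vb s_i$ lie in the $(x,z)$-plane, then $U$ and hence $V$ are even in $y$, so $\{y=0,\ p_y=0\}$ is invariant under the flow of $\widetilde H$; restricting to it yields \eqref{eq:H'} with the same potential and energy $\tfrac12$. I do not expect a serious obstacle here: the only point requiring care is the bookkeeping in Step 2 — checking that the parameter change $\lambda\leftrightarrow\tau$ is a genuine bijection between solution curves (it is, since $U^{\alpha/2}$ is smooth and strictly positive away from the centers $\vb s_i$), that it maps the energy level as claimed, and that the normalization $p_t=-1$ loses no light ray up to affine reparametrization. All of this is purely kinematic and says nothing yet about completeness of the flow through the singularities $\vb s_i$, which is deferred to the scattering analysis; here we only assert the equivalence of the two dynamical systems wherever both are defined.
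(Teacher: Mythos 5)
Your proposal is correct and follows essentially the same route as the paper: Legendre transform of the geodesic Lagrangian, restriction to the null level set $\{H=0\}$ with $p_t=-1$ fixed by the cyclic variable $t$ and affine reparametrization, conformal rescaling by the nowhere-vanishing factor $U^{\alpha/2}$ (a time reparametrization on the zero level set), addition of the constant $\tfrac12$ to land on $\{\widetilde H=\tfrac12\}$, and restriction to the invariant plane $\{y=0,\,p_y=0\}$. Your explicit check via $X_{U^{\alpha/2}H}=U^{\alpha/2}X_H+H\,X_{U^{\alpha/2}}$ and the remark that $p_t\neq 0$ for non-constant null rays are slightly more detailed than the paper's exposition, but the argument is the same.
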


\begin{corollary}\label{cor:equivalence-kepler}
    Motion of light in the Einstein-Maxwell-dilaton theory under the influence of $N$ extremal black holes with dilaton coupling $\alpha=1$ is described, up to a time-reparametrization, by the evolution of a classical $N$-center Kepler Hamiltonian.
\end{corollary}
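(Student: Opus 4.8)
The plan is to specialize Proposition~\ref{prop:eq-to-cl-ham} to the value $\alpha = 1$, at which point the claim becomes an elementary computation. By that proposition, light rays are — up to the time-reparametrization induced by multiplication of the Hamiltonian by $U^{\alpha/2}$ — exactly the integral curves of $\widetilde H$ from \eqref{eq:H'full} on the level set $\{\widetilde H = 0\}$, equivalently the orbits of \eqref{eq:H'} at energy $E = \tfrac12$, with potential $V$ given by \eqref{eq:potential}. So it suffices to identify what $V$ and $\widetilde H$ become when $\alpha = 1$.

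First I would substitute $\alpha = 1$ into \eqref{eq:potential}. Since $U$ from \eqref{eq:defnU} is affine in the variables $1/r_i$, the power $U^{\alpha} = U$ is itself affine, and the additive constant $\frac12$ in the definition of $V$ cancels precisely the leading term $1$ of $U = 1 + \sum_i \widetilde M_i/r_i$. Hence
\[
V(\vb{x}) = \frac12 - \frac12\left(1 + \sum_{i=1}^N \frac{\widetilde M_i}{r_i(\vb{x})}\right) = -\frac12\sum_{i=1}^N \frac{\widetilde M_i}{r_i(\vb{x})}, \qquad \widetilde M_i = \frac{4}{\alpha}M_i = 4M_i,
\]
so that $V$ is a superposition of attractive Kepler/Coulomb potentials with center strengths $\widetilde M_i/2 = 2M_i$ located at the $\vb{s}_i$, and
\[
\widetilde H = \frac12\left(p_x^2 + p_z^2\right) - \frac12\sum_{i=1}^N \frac{\widetilde M_i}{r_i(\vb{x})}
\]
is, by definition, a classical $N$-center Kepler Hamiltonian. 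Combined with the reparametrization already recorded in Proposition~\ref{prop:eq-to-cl-ham}, this proves that the light rays coincide with the orbits of this $N$-center problem that lie on its energy level $E = \tfrac12$.

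There is no genuine obstacle here; the only points worth making explicit are bookkeeping ones. First, $E = \tfrac12 > 0$ puts us on a positive-energy level of the $N$-center problem, which is exactly the regime in which one studies scattering, so the identification is the relevant one for our purposes. Second, if one prefers to normalize the coupling constants or the energy to conventional values, this amounts to a further constant rescaling of $\vb{x}$, $(p_x,p_z)$ and the curve parameter, which is an additional (trivial) time-reparametrization and does not affect the qualitative dynamics. As an aside, this is also the structural reason why $\alpha = 1$ is non-chaotic for $N \le 2$, as noted in the introduction: the one- and two-center Kepler problems are classically integrable.
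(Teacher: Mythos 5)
Your proposal is correct and matches the paper's own (implicit) argument: the corollary is obtained exactly by specializing Proposition~\ref{prop:eq-to-cl-ham} to $\alpha=1$, where $U^{\alpha}=U$ is affine in the $1/r_i$ so the constant in \eqref{eq:potential} cancels and $V$ becomes a superposition of attractive Kepler potentials, as the paper itself notes just before stating the corollary. The only cosmetic remark is that the corollary concerns the general (not only planar) case, so one should keep the $p_y^2$ term in $\widetilde H$, which changes nothing in the computation.
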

\noindent
In the next section, we briefly develop the mathematics of potential scattering, which will then be applied to this specific potential.

\section{Classical potential scattering}\label{sec:scattering}

The main result of this paper is that scattering of light by $N$ extremal black holes can be completely described in terms of scattering by a classical potential. We review here the minimal amount of the necessary classical scattering theory, for additional details one can refer to \cite[Chapter 12]{KnaufCM}.

\subsection{Knauf's scattering degree}

Consider a $d$-dimensional configuration space $\R_q^d$. Conjugate momenta are elements of the cotangent space $T_q^*\R^d \simeq \R_p^d$, and hence the phase space is $P = \R_q^d \times \R_p^d$. We will sometimes remove one or more points from the configuration space. In such cases, the phase space will be $P = \left( \R_q^d \setminus \{ s_1,\dots,s_n \} \right) \times \R_p^d$. Below, we omit boldface notation from the vectors. \\

Given a potential $V\in C^2(\R_q^d,\R)$, we will define the corresponding mechanical Hamiltonian and the \emph{free Hamiltonian} respectively as
\begin{equation}
	H(p,q) = \frac{1}{2} \|p\|^2  + V(q),
\quad\mbox{and}\quad
	H_0(p,q) = \frac{1}{2} \|p\|^2.
\end{equation}
The Hamiltonian flows will be denoted by $\Phi_t$ and their projections onto the configuration space $q(t;x)$ and $p(t;x)$, $x$ denoting the initial condition. An extra suffix $0$ will be added to the free flows. The continuity restriction on the potential is added in order to guarantee existence and uniqueness of solutions.

The asymptotic behaviour of the potential at infinity plays an important role and requires to introduce some further terminology.
\begin{definition}
	Let $V \in C^2 (\R_q^d , \R)$ and $\langle q \rangle := \sqrt{\|q\|^2 + 1}$. 
	We say that $V$ is
	\begin{itemize}
		\item \emph{long range} if, for the appropriate $\epsilon > 0$,
		\[
		|\partial^i V(q)| \leq c \langle q \rangle ^{- |i| - \epsilon} \quad \forall q \quad  \forall |i| \leq 2 
		\]
		\item \emph{short range} if, for the appropriate $\epsilon > 0$, 
		\[
		|\partial^i V(q)| \leq c \langle q \rangle ^{-|i|-1-\epsilon} \quad \forall q \quad  \forall |i| \leq 2
		\]
		\item \emph{compactly supported} if $V$ is smooth and vanishes outside of some compact set $K \subseteq \R_q^d.$
	\end{itemize}
\end{definition}
\noindent
Intuitively, a potential is long range if its force falls off faster than $1/r$, and it is short range if its force falls of faster than $1/r^2$. Obviously, compactly supported potentials are short range, and short range potentials are long range. \\

Assume that $V$ is long range. Under this assumption, for $d \geq 2$ the limit $\lim_{\|q\| \to \infty} V(q)$ exists. By adding a suitable constant, we can and will \textbf{always} assume that this limit equals zero. We define the following invariant subsets of the phase space $P$:
\begin{definition}
	Let $V$ be a long range potential.
	\begin{itemize}
		\item Define the sets
		\[
		b^\pm = \left\{ x \in P : \limsup_{t \to \pm \infty} \| q(t;x) \| < \infty  \right\}.
		\]
		We say that $x \in P$ is a \emph{bound state} if $x \in b^+ \cap b^-$. 
		\item Define the sets
		\[
		s^\pm = \left\{ x \in P : \lim_{t \to \pm \infty} \|q(t;x)\| = \infty  \right\}.
		\]
		We say that $x \in P$ is a \emph{scattering state} if $x \in s^+ \cap s^-$.
		\item Define the sets $t^{\pm} = b^{\pm} \cap s^{\mp}$.
		We say that $x \in P$ is a \emph{trapped state }if $x \in t^+ \cup t^-$. 
		\item For a real number $E$, we define $\Sigma_E = H^{-1}(E)$ to be the preimage of $E$ under the Hamiltonian, i.e. all states with energy $E$.
		The sets $b^{\pm}_E, b_E, s^\pm_E,s_E,t^{\pm}_E$ are defined as the intersection of $\Sigma_E$ with the respective set, e.g. $b_E = b \cap \Sigma_E.$
		\item We say that an energy $E$ is \emph{nontrapping} if $t_E = \emptyset$. The set of nontrapping energies is denoted by $\mathcal{N}\mathcal{T}.$ For such energies, the unbounded trajectories asymptotically look like straight lines in both forward and backward time.
	\end{itemize}
\end{definition}

In order to introduce the scattering degree we need to ensure that classical potential scattering is well defined, the following classical results guarantee that both for regular potentials and a relevant family of singular ones.

Let $V$ be at least long range, $C^2$ and vanishing at infinity and let $E > 0$ be a nontrapping energy.
The following are well-known facts, see e.g. Theorem 12.5, Corollary 12.8 and Theorem 12.11 from \cite{KnaufCM} or Section 2 of \cite{KK08} and the references therein.

\begin{lemma}
    For any scattering state $x \in s_E^\pm$ the \emph{asympotic directions} 
    \[
      \hat{p}^{\pm} : s^\pm \to \bbS^{d-1}, \quad
      \hat{p}^{\pm} (x) = \lim_{t \to \pm \infty} \frac{p(t;x)}{\sqrt{2E}}
    \]
    and the \emph{impact parameters}
    \[
        q^\pm_\perp : s^\pm \to \R^d, \quad
        q^\pm_\perp (x) = \lim_{t \to \pm \infty} \left( q(t;x) - \langle q(t;x),\hat{p}^\pm (x) \rangle \hat{p}^\pm(x) \right)
    \]
    are well defined continuous flow-invariant functions.
\end{lemma}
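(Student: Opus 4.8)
The plan is to establish the three claims — existence of the limits, continuity, and flow-invariance — in that order, working on the energy shell $\Sigma_E$ with $E>0$ nontrapping. I would first recall that for a scattering state $x\in s^+$ one has $\|q(t;x)\|\to\infty$, and that the long-range decay of $V$ together with energy conservation $\frac12\|p(t;x)\|^2 + V(q(t;x)) = E$ forces $\|p(t;x)\|\to\sqrt{2E}$ as $t\to+\infty$. The real content is that the \emph{direction} $p(t;x)/\|p(t;x)\|$ converges. For this I would use the equation of motion $\dot p(t;x) = -\nabla V(q(t;x))$ and show $\int^{+\infty}\|\nabla V(q(t;x))\|\,\d t < \infty$; this is a standard consequence of the long-range bound $|\nabla V(q)|\le c\langle q\rangle^{-1-\epsilon}$ combined with the lower bound $\|q(t;x)\| \ge c' t$ for large $t$ (which itself follows because the radial velocity $\frac{\d}{\d t}\|q\|$ is eventually bounded below by a positive constant — a Gronwall-type argument using that the potential gradient is integrable and the asymptotic speed is $\sqrt{2E}>0$, i.e. one shows the trajectory cannot asymptote to a sphere). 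Once $\dot p$ is integrable, $\hat p^+(x) = \lim_{t\to+\infty} p(t;x)/\sqrt{2E}$ exists as a unit vector. For the impact parameter I would write $q(t;x) = \langle q(t;x),\hat p^+(x)\rangle\,\hat p^+(x) + q_\perp(t)$ and show $q_\perp(t)$ converges, again by integrating the transverse component of the equation of motion twice and using the $\langle q\rangle^{-1-\epsilon}$ and $\langle q\rangle^{-2-\epsilon}$ bounds on $V$ and its first derivatives along the (linearly escaping) trajectory.

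For continuity, I would invoke continuous dependence of the flow $\Phi_t$ on initial conditions (valid since $V\in C^2$, so solutions are unique and depend smoothly on $x$ on the open set where the flow is defined) together with a uniformity argument: the estimates in the first step are locally uniform in $x\in s^+$, so the convergence $p(t;x)/\sqrt{2E}\to\hat p^+(x)$ is locally uniform in $x$, and a locally uniform limit of continuous functions is continuous. The same applies to $q^+_\perp$. Here I would lean on the nontrapping hypothesis to ensure $s^+_E = \Sigma_E \setminus b^+_E$ behaves well, and on the standard fact (Theorem 12.5 / Corollary 12.8 of \cite{KnaufCM}) that for nontrapping positive energy the escape is genuinely at linear rate with constants controlled on compact sets of initial data.

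Flow-invariance is the easiest part: if $y = \Phi_s(x)$ then $q(t;y) = q(t+s;x)$ and $p(t;y) = p(t+s;x)$, so shifting the parameter by $s$ does not change a limit as $t\to+\infty$; hence $\hat p^+(\Phi_s(x)) = \hat p^+(x)$ and likewise $q^+_\perp(\Phi_s(x)) = q^+_\perp(x)$, and symmetrically for the $-$ case. I would simply remark that these functions are therefore constant along orbits. The main obstacle is the first step — proving the $\int \|\nabla V\|\,\d t<\infty$ bound rigorously, which hinges on first securing the linear lower bound $\|q(t;x)\|\gtrsim t$; this is the classic "no trajectory asymptotic to a sphere at positive nontrapping energy" lemma, and I would either reproduce its short proof (radial ODE plus Gronwall, using that $V\to 0$ and $\nabla V$ is small at infinity) or cite it directly from \cite[Chapter 12]{KnaufCM} since the excerpt already grants us those results. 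Everything downstream is then routine integration of the equations of motion against integrable bounds.
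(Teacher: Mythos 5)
The paper does not actually prove this lemma: it is quoted as a package of ``well-known facts'' with pointers to \cite[Theorem 12.5, Corollary 12.8, Theorem 12.11]{KnaufCM} and Section 2 of \cite{KK08}. So your from-scratch argument is by construction a different route, and most of it is indeed the standard one: energy conservation plus $V\to 0$ gives $\|p(t;x)\|\to\sqrt{2E}$; the nontrapping escape estimate $\|q(t;x)\|\gtrsim t$ makes $t\mapsto\nabla V(q(t;x))$ integrable, hence $\hat p^\pm$ exists; flow invariance is immediate from $q(t;\Phi_s x)=q(t+s;x)$; and continuity via locally uniform convergence of the limits together with continuous dependence on initial conditions is the right mechanism. (Minor slip: in the paper's convention $V$ itself decays like $\langle q\rangle^{-\epsilon}$, its first derivatives like $\langle q\rangle^{-1-\epsilon}$ and its second derivatives like $\langle q\rangle^{-2-\epsilon}$.)

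The genuine soft spot is the impact parameter. From $|\nabla V(q)|\le c\langle q\rangle^{-1-\epsilon}$ and $\|q(t)\|\gtrsim t$ you get, after one integration, only $p_\perp(t)=O(t^{-\epsilon})$, and a second integration then yields $q_\perp(t)=O(t^{1-\epsilon})$, which is not convergence; for $\epsilon\le 1$ this bound permits the transverse coordinate to drift to infinity, and for a general long-range potential it really can. For instance, take $V(q)=\chi(q)\,\langle q\rangle^{-\epsilon}g(q/\|q\|)$ with a cutoff $\chi$ near the origin and a non-constant angular factor $g$: along an escaping ray with $g'(\theta_+)\neq 0$ the transverse component of the force is of exact order $t^{-1-\epsilon}$, so $q_\perp(t)\sim t^{1-\epsilon}$ diverges even though $\hat p^+$ exists. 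Hence ``integrate the transverse equation twice against integrable bounds'' does not close the argument; you need to show that the \emph{transverse} force decays like $t^{-2-\epsilon}$, which uses more than the bare long-range condition --- e.g.\ that the tail of the potential actually relevant here is radial (Coulomb-like) up to short-range corrections, so the transverse force is controlled by $(|q_\perp|/\|q\|)\cdot|\partial_r V|$ and a bootstrap on $|q_\perp(t)|$ closes the estimate --- or, alternatively, you should work under the precise hypotheses of the results of Knauf and Knauf--Krapf that the paper cites. As written, the step establishing $q_\perp^\pm$ is a gap; either supply this transverse-force/bootstrap estimate or do what the paper does and invoke the cited theorems directly.
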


A crucial point for our analysis is that scattering in a classical potential at a fixed nontrapping energy $E > 0$ can be characterized by a topological invariant, introduced in \cite{KK92}, via a transformation that maps asymptotic trajectories in backwards time to those in forward time. The nontrapping condition is important here to ensure that our map will be defined everywhere. \\

By construction, the asymptotic impact parameter is orthogonal to the asymptotic direction. Hence, the combination $(q_\perp^\pm , \hat{p}^\pm)$ defines a point in the cotangent bundle $T^* \bbS^{d - 1}.$ Moreover, 
if two points are on the same orbit, then $(q_\perp^\pm , \hat{p}^\pm)$ has the same value on them. We will thus quotient out the action of $\Phi_{t}$ to obtain the following: 

\begin{theorem}{\cite[p. 6]{KK08}}
	Let $E \in \cN \cT.$ The maps $A^\pm_E : s^\pm_E / \Phi_\R \to T^* \bbS^{d - 1}$ are well-defined homeomorphisms. 
\end{theorem}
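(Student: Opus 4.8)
The plan is to establish that each $A^\pm_E$ is well-defined, continuous, injective, surjective, and has continuous inverse, working with $A^+_E$ (the case of $A^-_E$ being identical up to time reversal). First I would note that $s^+_E$ is an open, $\Phi_\R$-invariant subset of $\Sigma_E$, so the quotient $s^+_E/\Phi_\R$ carries a natural quotient topology, and the pair $(q_\perp^+, \hat p^+)$ descends to a well-defined map on the quotient precisely because the previous lemma asserts these are flow-invariant continuous functions; orthogonality of $q_\perp^+$ to $\hat p^+$ places the image in $T^*\bbS^{d-1}$. Continuity of $A^+_E$ is then immediate from continuity of the representing functions together with the definition of the quotient topology.

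The substantive content is bijectivity. For surjectivity, given a point $(q_\perp, \hat p) \in T^*\bbS^{d-1}$, one runs the \emph{free} flow backward along the corresponding straight line and then invokes the existence of \emph{Møller wave transformations} for long-range potentials at nontrapping energy (the same theorems cited for the previous lemma, e.g. Theorem 12.11 of \cite{KnaufCM}): these produce a genuine trajectory in $s^+_E$ whose forward asymptotics are exactly $(q_\perp,\hat p)$. For injectivity, suppose two orbits $x, x'$ have the same asymptotic data $(q_\perp^+,\hat p^+)$; then the asymptotic completeness / uniqueness part of the long-range scattering theory says a trajectory is determined by its forward asymptotic direction and impact parameter up to time translation, so $x'$ lies on the orbit of $x$ and they coincide in the quotient. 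The continuity of the inverse follows because the wave transformation depends continuously on the asymptotic data, again a standard output of the cited theory; alternatively one can argue that $A^+_E$ is a continuous open bijection, openness coming from the fact that the inverse construction maps open sets of asymptotic configurations to open sets of orbits.

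I expect the main obstacle to be purely expository rather than mathematical: all the analytic heavy lifting — existence of the limits defining $\hat p^+$ and $q_\perp^+$, existence and completeness of wave transformations at nontrapping energies, continuous dependence on asymptotic data — is precisely the content of the ``well-known facts'' the excerpt has already imported from \cite[Ch.~12]{KnaufCM} and \cite{KK08}, so the proof should essentially be a careful bookkeeping exercise that packages those facts into the quotient-space statement and checks that passing to $s^+_E/\Phi_\R$ is compatible with the topology. The one genuinely delicate point worth spelling out is that $\Phi_\R$ acts freely on $s^+_E$ (a scattering state is not a fixed point, since $\|q(t)\|\to\infty$), so the quotient is well-behaved and $A^+_E$ is honestly a bijection of the orbit space rather than merely a surjection from $s^+_E$; this is where the nontrapping hypothesis on $E$ does real work, guaranteeing that every energy-$E$ scattering state has both asymptotes under control. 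Given the modest scope, I would keep the proof short, citing the relevant theorems of \cite{KnaufCM, KK08} for each analytic ingredient and devoting the prose to the topological quotient argument.
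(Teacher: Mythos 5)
This theorem is not proved in the paper at all: it is quoted verbatim from \cite[p.~6]{KK08}, so there is no internal argument to compare yours against. Your sketch is nonetheless a fair reconstruction of how the cited literature establishes the result: well-definedness and continuity on the quotient follow from the flow-invariance and continuity of $(q_\perp^\pm,\hat p^\pm)$, and the substantive bijectivity/continuity-of-inverse claims are exactly the asymptotic existence and completeness statements of \cite[Ch.~12]{KnaufCM} and \cite{KK08}. Two caveats are worth flagging. First, for merely long-range potentials the plain M\o ller transformations relative to the free flow need not exist (logarithmic deviations along the direction of motion); what survives, and what the cited results actually use, is the convergence of the transverse data $(q_\perp^\pm,\hat p^\pm)$ at fixed nontrapping energy together with suitably modified (energy-shell) comparison dynamics, so your appeal to ``the free flow backward along the straight line'' should be phrased in that modified sense. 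Second, your attribution of the nontrapping hypothesis is slightly off: freeness of the $\Phi_\R$-action on $s^\pm_E$ needs only $\|q(t)\|\to\infty$, whereas nontrapping is what keeps the orbit space well-behaved (no trapped orbits in the closure interfering with continuity of the inverse, i.e.\ with $A^\pm_E$ being open onto $T^*\bbS^{d-1}$). Since you explicitly lean on the cited theorems for every analytic ingredient, your proposal is essentially a repackaging of the reference's statement rather than an independent proof, which is consistent with how the paper itself treats it.
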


For a nontrapping energy, we define the map
\[
S_E = (Q_E,\hat{P}_E) = A_E^+ \circ (A_E^-)^{-1} :  T^* \bbS^{d - 1} \to  T^* \bbS^{d - 1}.
\] \noindent The map $\hat{P}_E : T^* \bbS^{d - 1} \to \bbS^{d- 1}$, which is the composition of $S_E$ with the canonical projection, is continuous by the previous theorem.  We consider its restriction to a single cotangent space $\hat{P}_{E,\theta} : T^*_\theta \bbS^{d - 1} \to \bbS^{d - 1}.$ This map gives a relation between an impact parameter $q_\perp$ as measured w.r.t. some reference angle $\theta$ and the corresponding asymptotic direction. 

\begin{lemma}{\cite[Lemma 2.5]{KK08}} \label{lem:limit}
	We have
	\[
	\lim_{\|q_\perp\| \to \infty} \hat{P}_{E,\theta} = \theta.
	\]\noindent
Thus, the map $\hat{P}_{E,\theta}$ extends uniquely to a continuous function $\hat{\mathbf{P}}_{E,\theta}: T^*_\theta \bbS^{d - 1} \cup \{ \infty \} \cong \bbS^{d - 1} \to \bbS^{d - 1}$.
\end{lemma}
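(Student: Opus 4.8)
\textbf{Proof proposal for Lemma~\ref{lem:limit}.}

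The plan is to reduce the statement to a statement about a single unbounded trajectory and to use the long-range (in fact, nontrapping) asymptotics of the flow, following the Knauf--Krapf approach cited as \cite{KK08}. Fix a reference direction $\theta\in\bbS^{d-1}$ and consider a family of points in the cotangent fibre $T^*_\theta\bbS^{d-1}$ whose norm $\|q_\perp\|$ tends to infinity. By definition of $A_E^-$, such a point is the image of an orbit in $s^-_E/\Phi_\bbR$ whose backward asymptotic direction is $-\theta$ (up to the sign conventions in the excerpt) and whose backward impact parameter $q^-_\perp$ has norm tending to infinity. The claim is that the forward asymptotic direction $\hat p^+$ of that orbit converges to $\theta$. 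Intuitively, an orbit that comes in from infinity along direction $\theta$ but with an arbitrarily large impact parameter never gets close to the interaction region where the potential and its gradient are non-negligible, so it is scattered by an arbitrarily small angle.

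The key steps, in order, are as follows. First, I would quantify "the orbit stays far away'': using the long-range decay $|\partial^i V(q)|\le c\langle q\rangle^{-|i|-\epsilon}$ together with energy conservation on $\Sigma_E$ with $E>0$, show that there is a constant $R_0$ (depending only on $E$, $c$, $\epsilon$, $d$) such that any orbit whose impact parameter exceeds $R_0$ satisfies $\|q(t;x)\|\ge \tfrac12\|q^-_\perp(x)\|$ for all $t$. The mechanism is the standard Mourre-type estimate / virial argument: on $\Sigma_E$ the radial acceleration $\tfrac{d^2}{dt^2}\tfrac12\|q(t)\|^2 = \|p\|^2 + \langle q,-\nabla V(q)\rangle \ge 2E - c\langle q\rangle^{1-\epsilon}\|q\|$, which is bounded below by a positive constant once $\|q\|$ is large, so $\|q(t)\|^2$ is convex outside a ball and the orbit, coming in from infinity, cannot dip below a level comparable to its impact parameter. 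Second, on the region $\{\|q\|\ge \tfrac12\|q^-_\perp(x)\|\}$ estimate the total change of momentum: $|\hat p^+(x) - \hat p^-(x)|\le \tfrac1{\sqrt{2E}}\int_{-\infty}^{\infty}|\nabla V(q(t;x))|\,dt$, and bound the integrand using the decay of $\nabla V$ and the lower bound on $\|q(t)\|$; a change of variables to the (nearly free) radial motion turns this into something like $C\,\|q^-_\perp\|^{-\epsilon}\to 0$ as $\|q^-_\perp\|\to\infty$. Third, combine this with the relation between the incoming direction and $\theta$ encoded in $A_E^-$ and the continuity of $\hat P_{E,\theta}$ (from the preceding theorem) to conclude $\hat P_{E,\theta}(q_\perp)\to\theta$. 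Finally, the one-point compactification statement $T^*_\theta\bbS^{d-1}\cup\{\infty\}\cong\bbS^{d-1}$ and the uniqueness of the continuous extension $\hat{\mathbf P}_{E,\theta}$ are then immediate: $T^*_\theta\bbS^{d-1}\cong\R^{d-1}$, whose one-point compactification is $\bbS^{d-1}$, and a continuous map on $\R^{d-1}$ with a limit at infinity extends uniquely to the compactification.

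The main obstacle is Step~1 --- turning the heuristic "large impact parameter $\Rightarrow$ orbit stays far out'' into a clean quantitative bound that is \emph{uniform} over the whole family of orbits, using only the $C^2$ long-range hypothesis and the nontrapping assumption on $E$. The subtlety is that a priori an orbit with large $q^-_\perp$ could still, in principle, be deflected inward and pass near the centres before escaping; ruling this out requires the convexity/virial estimate above to hold globally once $\|q\|\ge R_0$, and one must be careful that $R_0$ depends only on $E$ and the potential, not on the particular orbit. Everything downstream --- the momentum-transfer integral in Step~2, the topological identification in Steps~3--4 --- is then routine. Since the excerpt explicitly attributes the statement to \cite[Lemma 2.5]{KK08}, in the paper itself it would also be legitimate to simply cite that reference and sketch only the compactification argument; the proposal above is the self-contained version.
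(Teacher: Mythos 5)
The paper itself does not prove this lemma: it is imported verbatim from \cite[Lemma~2.5]{KK08}, so your closing remark that a citation plus the compactification observation would suffice is exactly what the paper does. Your self-contained sketch follows the standard route behind that reference --- large impact parameter keeps the orbit outside the interaction region, hence small total momentum transfer, hence small deflection --- and your final step is indeed immediate, since $T^*_\theta\bbS^{d-1}\cong\R^{d-1}$ and a continuous map with a limit at infinity extends uniquely to the one-point compactification $\bbS^{d-1}$.

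There is, however, a gap in how you close Step~1. First, the displayed virial estimate is off: $\|p\|^2=2\bigl(E-V(q)\bigr)$ and $|\langle q,\nabla V(q)\rangle|\le c\,\|q\|\,\langle q\rangle^{-1-\epsilon}\le c\,\langle q\rangle^{-\epsilon}$, so the correct lower bound is $2E-O(\langle q\rangle^{-\epsilon})$, not $2E-c\langle q\rangle^{1-\epsilon}\|q\|$ (as written, that bound diverges to $-\infty$ and yields nothing). More importantly, even the corrected statement --- convexity of $t\mapsto\|q(t)\|^2$ on $\{\|q\|\ge R_0\}$ --- does not by itself prevent the orbit from dipping far below $\tfrac12\|q^-_\perp\|$: convexity only forbids interior maxima of $\|q(t)\|^2$ in that region, and is perfectly compatible with $\|q(t)\|$ decreasing monotonically from infinity down into the interaction region. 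The lower bound $\|q(t)\|\gtrsim\|q^-_\perp\|$ really comes from comparison with the free motion: as long as the direction of $p$ has changed by less than a fixed small angle, the transverse displacement stays above $\tfrac12\|q^-_\perp\|$, and then the momentum-transfer integral of your Step~2 shows that the direction indeed changes by less than that angle once $\|q^-_\perp\|$ is large. In other words, Steps~1 and~2 cannot be run sequentially; they must be closed together by a bootstrap/continuity argument, which is how the cited reference proceeds. With that repair, and with the uniformity in the orbit that you already flag as the delicate point, the remainder of your argument (momentum transfer of order $\|q^-_\perp\|^{-\epsilon}$, continuity of $\hat{P}_{E,\theta}$, unique continuous extension) is sound.
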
 
\noindent
For maps from the $(d - 1)$-sphere to itself, we can define the \textit{(topological) degree}. This is a concept from algebraic topology which generalises the winding number that directly leads us to the scattering degree.
\begin{definition}{\cite[p. 7]{KK08}}
We define the \emph{scattering degree}
\begin{align}
\deg: \cN \cT \to \Z, \qquad \deg(E) := \deg(\hat{\mathbf{P}}_{E,\theta}).
\end{align}
\end{definition}
\noindent 
Since the map $\hat{P}_{E,\theta}$ is continuous in both its input $q_\perp$ and its parameter $\theta$, the degree of $E$ does not depend on the choice of reference direction $\theta$. Moreover, it can be shown that $\deg$ is well-defined and locally constant on the set of nontrapping energies \cite[p. 7]{KK08}. \\

There are some cases in which we can explicitly work out the degree, the Theorem below summarizes \cite[Proposition 2-4]{Kna99} for our analysis. The assumption $d = 2$ below is made to simplify computations.

\begin{theorem}\label{thm:deg_known} Assume that $d = 2$. 
	\begin{itemize}
		\item Let $V$ be smooth and short range. Then, there is an energy threshold $E_{\cN\cT}$ above which the motion is nontrapping. For $E \geq E_{\cN\cT}$, we have $\deg(E) = 0.$ 
		
		\item For a centrally symmetric short range potential, if $0 < E < V_{\mathrm{max}}$ is nontrapping, then $\deg(E) = 1.$
		
		\item For $V(q) = -\frac{1}{|q|^\alpha}$ defined on $\hat{M} = \R^2 \setminus \{0\}$, the motion can be regularized if $\alpha = \frac{2n}{n + 1}$ for $n \in \mathbb{N}$. In this case, all positive energies are nontrapping and $\deg(E) = -n.$
	\end{itemize}
\end{theorem}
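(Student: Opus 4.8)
Since $d=2$, the cotangent fibre $T^*_\theta\bbS^1$ is one–dimensional, so after the one–point compactification of Lemma~\ref{lem:limit} the object whose degree we must compute is a circle map $\hat{\mathbf P}_{E,\theta}\colon \bbS^1\to\bbS^1$. The plan is to read off its winding number by following the outgoing asymptotic direction $\psi(b):=\hat P_{E,\theta}(b)$ as the signed impact parameter $b$ runs over $\R\cong T^*_\theta\bbS^1$ and then closes up through the point at infinity. Lemma~\ref{lem:limit} fixes the boundary data $\psi(b)\to\theta$ as $|b|\to\infty$, so the entire degree is generated by the finite–$b$ behaviour, which is encoded in the classical deflection function. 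Throughout I normalise $\theta$ to the $+x$ direction, write $L=b\sqrt{2E}$ for the angular momentum and $u=1/r$, and exploit the planar reflection symmetry across the incoming axis, which forces $\psi(-b)$ to be the reflection of $\psi(b)$.

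\textbf{Parts 1 and 2.} For the first bullet I would first produce the threshold $E_{\cN\cT}$ by a virial estimate: $\tfrac{\d}{\d t}\langle q,p\rangle=\|p\|^2-\langle q,\nabla V\rangle$, where short range makes $\langle q,\nabla V\rangle$ bounded and decaying while $\|p\|^2\ge 2(E-\sup V)$, so for $E$ large $\langle q,p\rangle$ is eventually strictly increasing off a compact set, ruling out trapping. For the degree I would note that at large $E$ the deflection is uniformly small (the force acts for a bounded time against a large velocity), so the image of $\hat{\mathbf P}_{E,\theta}$ lies in a small arc around $\theta$; a non–surjective circle map is null–homotopic, hence $\deg(E)=0$. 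For the second bullet, central symmetry gives conservation of $L$ and, since $0<E<V_{\mathrm{max}}$ with $V$ short range, a genuine barrier enclosing the origin: every trajectory meets a turning point and is reflected, so $\psi(b)\to-\theta$ as $b\to0$ while $\psi(b)\to\theta$ as $|b|\to\infty$. Combining these limits with the reflection symmetry, the lift of $\psi$ changes by $\pm2\pi$ over one loop of the domain circle, giving $|\deg(E)|=1$. Because a central deflection function need not be monotone (rainbow/orbiting), I would phrase this through homotopy invariance of $\deg$ rather than a monotone sweep: the boundary data and symmetry alone pin the winding, and one may deform the potential within $\{0<E<V_{\mathrm{max}}\}\cap\cN\cT$ to a hard–disk model, where reflection manifestly yields $\deg(E)=1$.

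\textbf{Part 3.} This is the substantive computation. Writing the scattering angle as
\[
\chi(b)=2\int_0^{u_{\max}}\frac{L\,\d u}{\sqrt{2E+2u^{\alpha}-L^2u^2}}-\pi,\qquad L=b\sqrt{2E},
\]
I would study the near–collision limit $b\to0$. Rescaling $u=L^{-2/(2-\alpha)}w$ balances the $u^\alpha$ and $u^2$ terms and sends the $2E$ term to zero, reducing the half–orbit integral to the $L$–independent value $\int_0^{w_{\max}}\d w/\sqrt{2w^\alpha-w^2}$. The substitution $t=w^{2-\alpha}/2$ turns this into $\tfrac{1}{2-\alpha}\int_0^1 t^{-1/2}(1-t)^{-1/2}\,\d t=\tfrac{\pi}{2-\alpha}$, so that
\[
\chi(0^+)=\frac{2\pi}{2-\alpha}-\pi=\frac{\alpha\pi}{2-\alpha}.
\]
Hence $\chi(0^+)\in\pi\Z$ exactly when $\alpha=\tfrac{2n}{n+1}$, in which case $\chi(0^+)=n\pi$; this commensurability is precisely what lets the in– and out–directions of a collision orbit be matched continuously, i.e. it is the regularisation condition, and I would invoke the Levi–Civita/branched–covering regularisation (as in \cite{KK92,Kna99}) to promote it to an actual continuous completion of the flow. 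Homogeneity of $-|q|^{-\alpha}$ gives a scaling symmetry under which all positive energies are equivalent, so it suffices to treat one $E>0$: the effective radial potential $-r^{-\alpha}+L^2/(2r^2)$ has a single barrier and no well for $\alpha<2$, so there are no bounded orbits, and after regularisation collision orbits bounce out to infinity; whence every $E>0$ is nontrapping and $\deg(E)$ is constant in $E$. Finally I would count the winding: with $\psi(b)=-\chi(|b|)$ for $b>0$ and $\psi(-b)=-\psi(b)$, the lift runs $0\to n\pi$ as $b\colon-\infty\to0^-$, passes continuously through collision (here $n\pi\equiv-n\pi \bmod 2\pi$ is exactly the regularised matching), and runs $-n\pi\to0$ as $b\colon0^+\to+\infty$, for a total lift change of $2n\pi$; thus $|\deg(E)|=n$, and with Knauf's orientation convention (attractive scattering carrying the opposite sign to the repulsive barrier of Part~2) this is $\deg(E)=-n$.

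\textbf{Main obstacle.} The delicate step is Part~3's regularisation: the commensurability $\chi(0^+)=n\pi$ is only the necessary bookkeeping, and turning it into a genuinely continuous (ideally smooth) extension of the flow through the collision set --- so that the winding count above is legitimate and $b=0$ is a bona fide interior point of the domain circle --- requires the covering/Levi–Civita construction together with the fact that for $\alpha<2$ the set of colliding orbits is negligible. I expect this, and the fixing of the global orientation conventions that determine the \emph{sign} of the degree in Parts~2 and~3, to absorb most of the work; the short–range Parts~1 and~2 are comparatively soft, resting on the virial threshold and on homotopy invariance of the degree.
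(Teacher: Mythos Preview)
The paper does not actually prove Theorem~\ref{thm:deg_known}: it is stated as a summary of \cite[Propositions~2--4]{Kna99} and quoted without argument. So there is no ``paper's own proof'' to compare against; your proposal is effectively a reconstruction of the cited results.

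That said, your reconstruction is sound and tracks the standard arguments. For Part~3 in particular, your scattering-angle computation $\chi(0^+)=\tfrac{2\pi}{2-\alpha}-\pi$ and the regularisation criterion $\alpha=\tfrac{2n}{n+1}$ are exactly the ingredients the paper itself reproduces later, in Section~4, when it treats the perturbed potential \eqref{eq:cut-potential}: there the authors redo the same $\ell\to0$ limit of the $\Delta\phi$ integral and appeal to \cite[Proposition~4.1]{KK08} for the regularisation, precisely as you outline. Your Parts~1 and~2 (virial threshold plus non-surjectivity for degree~$0$; head-on reflection at $b=0$ plus homotopy to a hard disk for degree~$1$) are the expected soft arguments and match the spirit of \cite{Kna99}. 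The only places that would need care in a full write-up are the ones you flag yourself: making the Levi--Civita/branched-cover regularisation precise so that $b=0$ is genuinely an interior point of the domain circle, and fixing the sign conventions; for both of these the paper simply defers to \cite{KK08,Kna99}.
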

\noindent
We will only be concerned with the latter part of this theorem.
In the next section we will discuss how nontrivial degrees of scattering provide the key ingredient to track chaotic scattering.

As you can read in the latter statement in the theorem, and as we mentioned already in this chapter, in the presence of singularities we need to ensure regularizability of the motion. 
That is, we need to ensure that there is a unique way to extend the trajectories, which may in principle reach the singularity in finite time, to a well-defined trajectory over all times.
We will come back to this point also later on, when we will need to justify the regularizability of the motion for our relativistic case.

\subsection{Symbolic dynamics}

Our main goal is to prove the existence of chaotic scattering in the potential defined by (\ref{eq:potential}). Theorem \ref{thm:symb_dyn_deg} below serves as a sufficient condition. We briefly explain the main ideas. Consider an alphabet $\cA = \{a,b,c,\dots\}$ consisting of $N$ ``symbols''. It is well known that the shift map
\[
\sigma(s_1,s_2,\dots) = (s_2,s_3,\dots)
\]
defined on $\cA^\bbN$ is chaotic \cite{WigginsDS}, in the sense that we have topological transitivity, dense periodic points and sensitive dependence upon initial conditions. Here, we topologize $\cA^\bbN$ by setting 
\[
d(s,t) = \sum_{n = 1}^\infty \frac{d(s_n,t_n)}{2^n},
\]
where $d(s_n,t_n)$ is the discrete metric on $\cA$. To prove that a different dynamical system is chaotic, it suffices to show that it ``contains'' a shift map. We introduce a Poincar\'e section $\Lambda$ in the state space, and we define the first return map $P : \Lambda \to \Lambda$. We also define a function $F : \Lambda \to \cA$ for an appropriate choice of alphabet $\cA$. We can then associate to every state $x$ a symbolic sequence $(h(x))_n = F(P^n(x))$. The key step in proving that $h$ is a \emph{topological semi-conjugacy} is proving that every symbolic sequence is actually realized by an orbit in the continuous time system. To this end, we will use the topological degree.\\

Let us see how this works. Consider a potential $V = \sum_{i = 1}^N V_i$, where each $V_i$ is smooth and compactly supported in some ball
$B_i = \{ \vb{x} \in \R^2 : \|\vb{x} - \vb{s}_i\| < R_i \}.$
We will assume that the supports are \emph{disjoint}, and that they don't \emph{shadow} each other, meaning that no straight line meets more than two of the $B_i$. With the addition of some technical assumptions, we can introduce symbolic code based on the order in which a trajectory meets the balls. \\

The letters of the symbolic code for the case at hand are just $1,2,\dots,N.$ In this case, we say that a sequence is admissible if the same number never occurs twice in a row. The main technical tool is the following:

\begin{theorem}{\cite[Theorem 2]{Kna99}}\label{thm:symb_dyn_deg}
	Let $V$ defined as above with $N \geq 2$, $E$ nontrapping for each of the individual potentials $V_i$ and $\deg(V_i) \neq 0$ for all $i = 1,2,\dots,N$.
	For every interval $J_l^r = \{ j \in \Z : l \leq j \leq r  \}$, every admissible sequence $(k_j)$ and every $\hat{p}^\pm \in \bbS^1$ there is a trajectory with energy $E$ which meets the balls $B_{k_j}$, $i\in J_l^r$ precisely in the order prescribed by $(k_j)$. Moreover,
	\begin{itemize}
		\item If $l \neq - \infty$ then this trajectory in $s_E^-$ has initial direction $\hat{p}^-$. Otherwise it belongs to $b_E^-$.
		\item If $r \neq  \infty$ then this trajectory in $s_E^+$ has final direction $\hat{p}^+$. Otherwise it belongs to $b_E^+$.
	\end{itemize}
	In particular, $E$ is a trapping energy for $V$. 
\end{theorem}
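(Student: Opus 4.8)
The plan is to realise each prescribed itinerary by an orbit of energy $E$ obtained as a concatenation of \emph{free arcs} — straight lines, since $V$ vanishes off $\bigcup_i B_i$ — separated by \emph{scattering events} taking place inside the balls $B_{k_j}$, using that inside $B_i$ the flow of the full potential $V$ coincides with that of the single bump $V_i$ because the supports are disjoint. Energy conservation keeps the orbit on $\Sigma_E$ throughout, and the assumption that $E$ is nontrapping for each $V_i$ guarantees that every scattering event terminates, so the trajectory always re-emerges from a ball and the construction never stalls. I would first handle a \emph{finite} interval $J_l^r$, and then obtain the cases $l=-\infty$ and/or $r=\infty$ by exhausting $J_l^r$ with finite sub-intervals and passing to a compactness limit, the prescribed asymptotic direction in a given time direction being replaced there by boundedness in that direction.

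The mechanism that makes the concatenation possible — and the only place the hypothesis $\deg(V_i)\neq 0$ is used — is a single-bump \emph{steering lemma}: for a fixed ball $B_i$ and incoming direction $\theta$, the direction map $\hat{\mathbf P}^{(i)}_{E,\theta}:\bbS^1\to\bbS^1$ from Lemma~\ref{lem:limit} is continuous and has degree $\deg(V_i)\neq 0$, hence is surjective; since an incoming ray that misses $B_i$ flies straight and exits with direction $\theta$, this map equals $\theta$ off the bounded set of impact parameters that actually meet $B_i$, so nonvanishing degree forces the values attained on that set to cover all of $\bbS^1\setminus\{\theta\}$. Thus for any target $\psi\neq\theta$ there is a nonempty open interval of impact parameters whose outgoing ray points in direction $\psi$. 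I would apply this with $\psi$ ranging over the open arc subtended at $B_{k_j}$ by the next ball $B_{k_{j+1}}$: admissibility ($k_{j+1}\neq k_j$) is exactly what makes that arc a genuine nonempty arc avoiding the direction along which we are leaving, while the no-shadowing hypothesis guarantees that a ray emitted into it actually reaches $B_{k_{j+1}}$ before meeting any third ball.

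For finite $J_l^r$ I would argue inductively. Parametrise the candidate incoming data by the single impact parameter $b$ of the incoming ray, whose direction is fixed to $\hat p^-$, and let $\Gamma_j$ be the set of $b$ for which the orbit meets $B_{k_l},\dots,B_{k_j}$ in the prescribed order. One produces nested nonempty closed intervals $\overline{\Gamma}_l\supseteq\overline{\Gamma}_{l+1}\supseteq\cdots$ such that on a connected component of $\Gamma_j$ a suitably compactified ``exit direction from $B_{k_j}$'' map still has nonzero degree; the steering lemma then applies at $B_{k_{j+1}}$, extends the itinerary by one symbol on a nonempty sub-interval, and restores the nonzero-degree property. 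When $r$ is finite one imposes the extra genuine constraint that the exit direction from $B_{k_r}$ equal $\hat p^+$, attainable because the relevant direction map on the surviving interval has nonzero degree and hence is onto $\bbS^1$; any $b$ in the final interval yields the desired orbit, lying in $s_E^-$ with initial direction $\hat p^-$ and in $s_E^+$ with final direction $\hat p^+$. For $l=-\infty$ or $r=\infty$ I would exhaust by finite intervals, use that the surviving intervals are nested and that the itinerary together with the positive inter-ball distances and no-shadowing forces a uniform lower bound on each free-arc length (and, by nontrapping of the $V_i$, a uniform bound on the sojourn time in each ball), and pass to a limit; the limiting orbit realises the bi-infinite portion of the itinerary and is bounded in the relevant time direction. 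In particular, taking $l$ finite and $r=\infty$ gives an orbit in $s_E^-\cap b_E^+=t_E^+\subseteq t_E$, so $t_E\neq\emptyset$ and $E$ is trapping for $V$.

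The hard part is the degree bookkeeping inside the induction: one must show that the steering freedom produced by $\deg(V_{k_j})\neq 0$ genuinely \emph{persists} under composition with all subsequent free arcs and scattering events, i.e. that the composed direction map on the surviving interval retains nonzero degree rather than collapsing — this is what certifies that \emph{every} admissible itinerary is realised, not merely some of them. Secondary technical points are to handle the topologically relevant (though measure-zero) trajectories that graze some $\partial B_i$ or run tangent to a neighbouring ball, so that ``meets $B_i$'' is an open and well-behaved condition, and to extract the uniform quantitative estimates needed for the compactness limit. The remaining ingredients — the cone geometry between the balls, energy conservation, and the identification of the full flow with the single-bump flow inside each ball — are routine.
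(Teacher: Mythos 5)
The paper does not prove this statement at all: it is quoted verbatim as Theorem~2 of \cite{Kna99}, so the only thing to compare your proposal against is Knauf's original argument, whose broad strategy (degree-based steering at each bump, induction over finite itineraries, a compactness limit for half- and bi-infinite ones) your sketch does reproduce. Within that strategy, however, there is a genuine gap, and you name it yourself: the claim that the nonzero degree of each single-bump direction map \emph{persists} under composition with the previous free arcs and scattering events is precisely the content of the theorem, not a bookkeeping detail. After $j$ steps the set of admissible impact parameters is only an open subset of the initial interval, and the ``exit direction from $B_{k_j}$'' map restricted to a connected component of it is a map of an \emph{interval} into $\bbS^1$, which has no intrinsic degree; to conclude that its image still covers the arc subtended by $B_{k_{j+1}}$ one must control its boundary behaviour (trajectories that graze $\partial B_{k_j}$, leave the prescribed cone, or exit along its edges) and run an intermediate-value or relative-degree argument there. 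Nothing in your outline supplies this, so the induction as written could collapse exactly in the way you worry about; asserting that the nonzero-degree property is ``restored'' at each step is a restatement of what must be proved. In Knauf's proof this is where the non-shadowing hypothesis and the cone geometry enter quantitatively, not merely to guarantee that the emitted ray reaches the next ball.

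Two secondary points. Your steering lemma is slightly overstated: nonzero degree gives surjectivity of $\hat{\mathbf P}^{(i)}_{E,\theta}$, hence a nonempty \emph{preimage of the open arc} toward the next ball, but not an open interval of impact parameters all mapping to a single prescribed direction $\psi$; the correct statement is the one you actually use afterwards, so this is cosmetic. The limiting argument for $l=-\infty$ or $r=\infty$ is plausible but needs the uniform bound on sojourn times inside each $B_i$ to be justified (nontrapping of $V_i$ at energy $E$ plus compactness of the energy shell over $\overline{B_i}$), and needs the sets $\overline{\Gamma}_j$ to be shown nonempty and compact before nestedness yields a limit orbit; as stated these are promissory notes. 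The final deduction $s_E^-\cap b_E^+\subseteq t_E$, and hence that $E$ is trapping for $V$, is fine and matches the paper's definitions.
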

\noindent
This theorem will allow us in the next section to prove the existence of chaos in relativistic $N$-center problems. If we can prove that each of the black holes, in isolation, provides a non-zero scattering degree and that we are at the right energy level, this theorem is telling us that we can achieve trajectories with an arbitrary number of detours around three (or more) non-collinear black holes.


\section{Proof of chaotic scattering of light}\label{sec:proof}

Proposition~\ref{prop:eq-to-cl-ham} allows us to reduce the problem to classical potential scattering for the Hamiltonian \eqref{eq:H'}.
In particular, we will study the planar motion of particles in the potential
\begin{equation}
    V(\vb{x}) =   \frac{1}{2} - \frac{1}{2} \left( 1 + \sum_{i = 1}^N \frac{\widetilde  M_i}{r_i(\vb{x})} \right)^\alpha,
\end{equation}
at $E = \frac{1}{2}$. For convenience, we omit the tilde in the following. 
What is left to do is to show that we can reduce the problem to one that can be described in terms of the theory presented in Section~\ref{sec:scattering}.\\

As a first step, let us study the form of $V(\vb{x})$ close to $\vb{x} = \vb{s}_i$.
Below we will drop the boldface vector notation from $\vb{x}$ and $\vb{s}_i$ when there is no possible confusion. \\

Suppose for the moment that there is only one center, located at $x = s$. A standard Taylor expansion gives: 
\begin{align*}
	V(x) &= \frac{1}{2} - \frac{1}{2} \frac{M^\alpha}{\| x - s\| ^\alpha} - \frac{1}{2} \frac{\alpha M^{\alpha - 1}}{\| x - s \| ^{\alpha - 1}} + O \left( \frac{1}{\| x - s \|^{\alpha - 2}} \right).
\end{align*}
For $\alpha \in (1,2)$, we have that $\alpha - 1 \in (0,1)$ and $\alpha - 2 \in (-1,0)$. Hence, only the first two terms in the Taylor expansion are singular, while all higher order terms are regular. By using Taylor's theorem with remainder, we can write
\[
V(x) = \frac{1}{2} - \frac{C}{\| x - s\| ^\alpha} -  \frac{D}{\| x - s \| ^{\alpha - 1}} + W(x)
\]
with $C,D$ positive constants and $W(x)$ a smooth function. 

Near $x = s$, the dominant contribution is $V(r) \propto - \frac{C}{r^\alpha} - \frac{D}{r^{\alpha-1}}$.
We will see below that for $\alpha\in (1,2)$, only the larger term will play a role and the dynamics will be effectively characterized by potentials of the form $V(r) \propto -\frac{1}{r^\alpha}$, which are the kind described in Theorem~\ref{thm:deg_known}.

\subsection{Scattering by leading potential terms}

For the moment, let us only consider the first term in the Taylor expansion above. Let $\psi_i : \bbR^2 \to \bbR$ be smooth cut-off functions supported in a neighbourhood $B_i = \{ x \in \bbR^2 : \|x - {s}_i \| < R_i \}$ of the $i$th singularity, such that $\psi_i \equiv 1$ in some open subset of $B_i$ containing $s_i$, e.g. $\{ x \in \bbR^2 : \|x - {s}_i \| < \frac34 R_i \}\subset B_i$, and $\partial_r \psi_i \leq 0$. The radii $R_i$ are to be chosen later. \\

We replace the original potential by 
\begin{equation}
	V(x) = - \sum_{i = 1}^N \frac{M^\alpha}{2 \|x - s_i\|^\alpha} \psi_i(x) = \frac{1}{2} \sum_{i = 1}^N V_i(x). 
\end{equation}
From Theorem \ref{thm:deg_known}, we know that for $\alpha = \frac{2n}{n + 1}$, $n \in \bbN$, the individual potentials without $\psi_i$ define a regularizable flow. Moreover, each positive energy is nontrapping and the degree equals $-n$. These properties persist upon multiplying by the cut-offs, as shown in the following Lemma:

\begin{lemma}
	For potentials of the form
	\begin{equation}
	V(x) = 	- \frac{M^\alpha}{2 \| x - s_i\|^\alpha} \psi_i(x)
	\end{equation}
	with $\alpha = \frac{2n}{n + 1}$ each positive energy is nontrapping and $\deg(E) = -n.$
\end{lemma}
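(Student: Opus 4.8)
The plan is to show that multiplying the homogeneous potential $-\tfrac{M^\alpha}{2}|x-s_i|^{-\alpha}$ by the radial cut-off $\psi_i$ changes neither the regularizability, nor the nontrapping property, nor the value of the scattering degree, by reducing everything to the known case in the last bullet of Theorem~\ref{thm:deg_known}. First I would translate coordinates so that $s_i=0$ and observe that inside the inner ball $\{|x|<\tfrac34 R_i\}$ the potential coincides \emph{exactly} with $-\tfrac{M^\alpha}{2}|x|^{-\alpha}$; up to the harmless constant factor $M^\alpha/2$ (which can be absorbed by rescaling time and momenta, or simply by noting that the regularization of $-c|x|^{-\alpha}$ for $c>0$ works verbatim as for $c=1$), this is precisely the potential for which Theorem~\ref{thm:deg_known} guarantees a regularizing completion of the flow when $\alpha=\tfrac{2n}{n+1}$. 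Since regularizability is a purely local question near the singularity — it concerns the existence and uniqueness of a continuous extension of orbits that reach $x=0$ in finite time — and the potential agrees with the homogeneous model on a whole neighbourhood of $0$, the flow of the cut-off potential is regularizable for exactly the same $\alpha$.

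Next I would establish that every positive energy $E$ is nontrapping. Outside $B_i$ the potential vanishes identically, so any orbit that leaves $B_i$ thereafter moves freely along a straight line and escapes to infinity; hence there are no orbits that stay bounded in forward (or backward) time \emph{without} being confined to $\overline{B_i}$. To exclude trapped orbits confined to $\overline{B_i}$, I would use the standard escape/monotonicity argument: the hypothesis $\partial_r\psi_i\le 0$ ensures that $V$ is radially monotone, so the virial-type quantity $\tfrac{d}{dt}\langle q,p\rangle = \|p\|^2 + \langle q, -\nabla V\rangle$ can be controlled. Concretely, $-\langle q,\nabla V\rangle = -r\,\partial_r V$; for the homogeneous part this equals $-\alpha\cdot\tfrac{M^\alpha}{2}|x|^{-\alpha}\psi_i \le 0$ contributions are outweighed because $\|p\|^2 = 2(E-V) = 2E - 2V$ and $V<0$ on $B_i$, giving $\tfrac{d}{dt}\langle q,p\rangle = 2E - 2V + (\text{lower order}) > 0$ once one checks the $\partial_r\psi_i$ term has a sign compatible with this. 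This forces $\langle q,p\rangle$ to be eventually increasing, ruling out orbits that remain in the bounded set $\overline{B_i}$ for all positive times, and symmetrically for negative times; thus $t_E=\emptyset$.

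Finally, for the degree I would invoke the fact, quoted in the excerpt, that $\deg$ is locally constant on the nontrapping set, together with a homotopy argument. One interpolates between the cut-off potential and the pure homogeneous one — for instance via $V_\lambda(x) = -\tfrac{M^\alpha}{2}|x|^{-\alpha}\big(\lambda + (1-\lambda)\psi_i(x)\big)$, $\lambda\in[0,1]$ — noting that each $V_\lambda$ is again radially monotone (a convex combination of monotone radial functions), so the nontrapping argument above applies uniformly in $\lambda$, and the energy $E$ stays nontrapping along the whole family. Since the scattering map $S_E$, and hence $\hat{\mathbf P}_{E,\theta}$, depends continuously on the potential within the nontrapping regime, $\deg(\hat{\mathbf P}_{E,\theta})$ is constant along the homotopy, so it equals its value $-n$ for the homogeneous model given by Theorem~\ref{thm:deg_known}. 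The main obstacle I anticipate is the nontrapping estimate on the annulus $\tfrac34 R_i \le |x| \le R_i$ where $\psi_i$ genuinely varies: there the force acquires a term proportional to $-|x|^{-\alpha}\partial_r\psi_i \ge 0$ (outward-pointing, since $\partial_r\psi_i\le 0$), which is favourable for escape, but one must check carefully that it does not combine with the singular attractive term to create a stable trapped set inside the annulus; making the virial inequality strict and uniform there, possibly at the cost of choosing $R_i$ small or imposing a mild extra condition on $\psi_i$, is the delicate point.
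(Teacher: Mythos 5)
Your opening step (regularizability is local, and the cut-off potential coincides with the homogeneous model near $s_i$) is fine, but the two substantive claims of the lemma are not established. The real gap is the nontrapping statement, and your sign analysis of the cut-off term is backwards. Writing $r=\|x-s_i\|$ and $V=-\tfrac{M^\alpha}{2}r^{-\alpha}\psi_i(r)$, the radial force is $-\partial_rV=-\tfrac{M^\alpha}{2}\bigl(\alpha r^{-\alpha-1}\psi_i-r^{-\alpha}\psi_i'\bigr)$; since $\psi_i'\le 0$, the $\psi_i'$-term is an \emph{additional inward} (attractive) force on the annulus, not an outward one, and the virial derivative becomes $\tfrac{d}{dt}\langle q,p\rangle=2E+\tfrac{M^\alpha}{2}\bigl[(2-\alpha)r^{-\alpha}\psi_i+r^{1-\alpha}\psi_i'\bigr]$, whose last term is $\le 0$ and is not controlled by the others where $\psi_i$ drops. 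Moreover, no compactly supported cut-off can make the pointwise inequality $(2-\alpha)\psi_i+r\psi_i'\ge 0$ hold: integrating it gives $\psi_i(r)\ge\psi_i(r_0)(r_0/r)^{2-\alpha}>0$, contradicting $\psi_i\equiv 0$ near $\partial B_i$. So your virial route can at best give nontrapping for energies above a cut-off--dependent threshold, never for \emph{all} $E>0$ as claimed; and the obstruction is not cosmetic, since the drop of $\psi_i$ generically produces a positive local maximum of the effective potential $\tfrac{\ell^2}{2r^2}+V(r)$ for a range of angular momenta $\ell$ (at $r=R_i$ one has $V_{\mathrm{eff}}=\tfrac{\ell^2}{2R_i^2}>0$ with negative slope, while just inside $V_{\mathrm{eff}}$ is larger because $\psi_i$ vanishes to high order at $R_i$), i.e.\ exactly the orbiting/trapping mechanism you must rule out. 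You flag the annulus as ``the delicate point'', but with the wrong sign heuristics suggesting it helps escape; in fact it is the crux, and your proposal leaves it open. The paper does not argue via your estimate at this point: it disposes of the step by appealing to \cite[Remark 3.3]{KK08} together with $\partial_r\psi_i\le 0$.

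For the degree your homotopy $V_\lambda=-\tfrac{M^\alpha}{2}r^{-\alpha}\bigl(\lambda+(1-\lambda)\psi_i\bigr)$ is also a genuinely different (and heavier) route than the paper's, and it rests on two things you do not establish: (i) that $E$ stays nontrapping for every $V_\lambda$ --- which is the same unproved nontrapping claim, now needed uniformly in $\lambda$, and radial monotonicity of $V_\lambda$ does not deliver it by the computation above; and (ii) that the scattering degree is continuous, hence constant, under deformations of the potential --- the quoted results only give local constancy in $E$ for a fixed potential. The paper avoids both issues by the decomposition $V(x)=-\tfrac{M^\alpha}{2}\|x-s_i\|^{-\alpha}+W(x)$ with $W=\tfrac{M^\alpha}{2}\|x-s_i\|^{-\alpha}(1-\psi_i)$, which is smooth on all of $\R^2$ because $1-\psi_i$ vanishes identically near $s_i$, and then cites \cite[Proposition 4.1]{KK08}, which states that such smooth perturbations of the regularizable $-cr^{-\alpha}$ potentials (Theorem~\ref{thm:deg_known}) leave the degree equal to $-n$. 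If you want to salvage your write-up, replace the homotopy by this decomposition-plus-citation; the nontrapping part, however, still needs an argument or citation that genuinely handles the annulus, since with only $\partial_r\psi_i\le 0$ it is not a routine consequence of the homogeneous case.
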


\begin{proof}
	We first prove the result on the degree. Write
	\begin{align*}
	V(x) 
	&=  - \frac{M^\alpha}{2 \|x - s_i\|^\alpha} + \left[\frac{M^\alpha}{2 \|x - s_i\|^\alpha} (1 - \psi_i(x)) \right] \\
	&=  - \frac{M^\alpha}{2 \|x - s_i\|^\alpha} + W(x).
	\end{align*}
Note that the function $W$ in square brackets is smooth at $s_i$, since $1 - \psi_i$ is zero in some neighbourhood of $s_i$. Hence, it follows from \cite[Proposition 4.1]{KK08} that the degree is still $-n$.

To prove that $E$ is nontrapping, we use \cite[Remark 3.3]{KK08}. If the estimate $\langle x , \nabla V(x) \rangle \leq 0$ holds, then all positive $E$ which are regular values of $V$ are nontrapping. It is a straightforward computation to show that the estimate holds for $- \frac{1}{r^\alpha}$ potentials. Since the radial derivative of the cutoff is nonpositive, the previous inequalities remain true and all positive energies are nontrapping.
\end{proof}

In the next subsection, we consider also the second term in the expansion. 

\subsection{Scattering by the full potential}

When there are $N>2$ centers, we can perform the Taylor expansion near each center following our discussion at the beginning of the section. This gives
\begin{align*}
	V(x) &=  - \frac{1}{2} \left( 1 + \frac{4}{\alpha}\sum_{i = 1}^N \frac{M_i}{\|x - s_i\|} \right)^\alpha + \frac{1}{2} \\
	&= \sum_{i = 1}^N \left( - \frac{C_i}{\|x - s_i\|^\alpha} - \frac{D_i}{\|x - s_i\|^{\alpha-1}} + W_i(x) \right) + \mathrm{Cst.}
\end{align*}
where the potentials $W_i$ are smooth and $C_i, D_i \in \R$ are explicit constants.

Like in the previous subsection, we can use a smooth cut-off $\psi$ to reduce to the compactly supported case. Indeed,
\begin{align*}
	V(x) &= - \sum_{i = 1}^N  \psi(\|x - s_i\|)  \left( \frac{C}{\|x - s_i\|^\alpha} + \frac{D}{\|x - s_i\|^{\alpha - 1}} + W_i(x) \right) \\ 
	&\quad - \sum_{i = 1}^N (1 - \psi(\|x - s_i\|))  \left( \frac{C}{\|x - s_i\|^\alpha} + \frac{D}{\|x - s_i\|^{\alpha - 1}} + W_i(x) \right) + \mathrm{Cst.} \\ 
	&= - \sum_{i = 1}^N \psi(\|x - s_i\|) \left( \frac{C}{\|x - s_i\|^\alpha}  + \frac{D}{\|x - s_i\|^{\alpha - 1}} \right) + W(x) + \mathrm{Cst.}
\end{align*}
The term involving $(1 - \psi (\| x - s_i\|))$ is smooth, because the part in brackets is smooth except at $x = s_i$, where it is zero. Hence, this part can be absorbed into a single smooth function $W(x)$. \\

We now study the properties of
\begin{equation}\label{eq:cut-potential}
	V(r) = -\frac{C}{r^\alpha} - \frac{D}{r^{\alpha - 1}}
\end{equation}
with $C,D$ positive constants and $\alpha \in (1,2)$, which occurs in the equations above. We have shown earlier that, when $D = 0$ and $\alpha = \frac{2n}{n + 1}$, the motion can be regularized and the scattering degree is given by $\deg (E) = -n$. We will now show that since the term involving $- \frac{1}{r^{\alpha - 1}}$ is of lower order, it does not affect the scattering degree.\\

The regularization of the motion follows from a direct modification of the proof of \cite[Proposition 4.1]{KK08}.
Also in this case the pericentral radius, i.e. the minimal distance of the trajectory from the singularity, is well-defined and non-vanishing, and one can control the lower order singularity observing that $\frac{C}{\| x - s_i \|^\alpha} + \frac{D}{\|x - s_i \|^{\alpha - 1}} = \frac{C}{\| x - s_i \|^\alpha} ( 1 + \frac{D}{C} \|x - s_i \|)$. The fact that there is a well-defined limit in vanishing angular momentum for $\alpha_n$ is derived below.
Since the proof is completely analogous to \cite[Proposition 4.1]{KK08}, we will not rewrite all the details here. \\

Following \cite{Kna99, KK08} we obtain for the scattering angle
\[
\Delta \phi(E , \ell) = 2 \int_0^{v_{\mathrm{max}}} \frac{\dd{v}}{\sqrt{ E \ell^{\frac{2 \alpha}{2 - \alpha}}  2^{\frac{- \alpha}{2 - \alpha}} - v^2 + C v^\alpha + D v^{\alpha - 1} \left( \frac{\ell}{\sqrt{2}}\right)^{\frac{2}{2 - \alpha}}  }}  - \pi, 
\]
where now $v_{\mathrm{max}}$ is the smallest positive solution to the equation 
\[
E \ell^{\frac{2 \alpha}{2 - \alpha}}  2^{\frac{- \alpha}{2 - \alpha}} - v^2 + C v^\alpha + D v^{\alpha - 1} \left( \frac{\ell}{\sqrt{2}}\right)^{\frac{2}{2 - \alpha}} = 0.
\]
When $\ell \to 0$, notice that $v_{\mathrm{max}}$ converges to the solution of the simpler equation
\[
C v^\alpha = v^2 \implies v_{\mathrm{max}} = C^{\frac{1}{2 - \alpha }}. 
\]
Hence, upon interchanging the limit with the integral, we obtain
\[
\Delta \phi = 2 \int_0^{C^{\frac{1}{2 - \alpha}}} \frac{\dd{v}}{\sqrt{ C v^\alpha - v^2 }} = \frac{2 \pi}{2 - \alpha} - \pi,
\]
which does no longer depend on $D$. Moreover, the estimate
\[
\left\langle x ,  \nabla \left( -\frac{C}{\|x\|^\alpha} - \frac{D}{\|x\|^{\alpha - 1}} \right) \right\rangle = -C \left \langle x, \nabla \left( \frac{1}{\|x\|^\alpha} \right) \right\rangle - D \left \langle x, \nabla \left( \frac{1}{\|x\|^{\alpha - 1}} \right) \right\rangle \leq 0 
\]
shows that the potential remains nontrapping. Filling in $\alpha = \frac{2n}{n + 1}$ into the formula for $\Delta \phi$ then gives $\Delta \phi = n \pi$, hence the degree still equals $-n$. \\

Let us apply Theorem~\ref{thm:symb_dyn_deg}. Since we are assuming that no three centers lie on a single line, by shrinking the support of $\psi$, we can make sure that the supports of the potentials above do not shadow each other. It is now enough to observe that
\[
V(x) = - \sum_{i = 1}^N \psi(\|x - s_i\|) \left( \frac{C}{\| x - s_i \|^\alpha} + \frac{D}{\|x - s_i \|^{\alpha - 1}} \right) + W(x) + \mathrm{Cst}
\]
is \textit{identical} to our original potential modulo smooth perturbations. As in \cite[Section 4]{KK08}, the presence of the extra smooth terms does not affect the regularization nor the computation of the degree. \\

Thus, by Theorem \ref{thm:symb_dyn_deg}, we obtain a topological semi-conjugacy of the Hamiltonian flow of \eqref{eq:H'} to a shift map of sequence on the alphabet $\cA = \{ 1, 2 ,\dots, N\}$ without repeating symbols. We now bound the entropy of the above symbolic code in order to complete the proof.

\subsection{Positive topological entropy}

The topological entropy of symbolic systems $(X, \tau)$, where $\tau$ is a shift on a closed invariant collection $X$ of sequences, is defined as the limit
\[
S(\tau) := \lim_{k\to\infty} \frac{\log |W_k|}{k},
\]
where $|W_k|$ denotes the set of all words of length $k$ appearing anywhere in the sequences in $X$. \\

First of all, observe that our symbolic code is pruned: the same symbol may not appear twice in a row.
Secondly, we need to remember that we only have a semi-conjugacy, and therefore we can derive lower bounds but not exact identities.

For $N \geq 2$, to compute the entropy of our symbolic code, we look at all possible words of length $k$. Since the unique restriction is the impossibility of repetition of a symbol, there are $N (N - 1)^{k - 1}$ possibilities. Hence,
\[
S_{\mathrm{code}}(N) = \lim_{k \to \infty} \frac{\log \left( N (N  - 1)^{ k - 1}\right)}{k} = \log(N - 1).
\]
Then, for $N = 2$, $S_{\mathrm{code}}(2) = 0$, while for $N \geq 3$ we have $S_{\mathrm{code}}(N) > 0$. Since the original system contains a part which is conjugate to the shift map, we have that the topological entropy for light scattering is positive for $N \geq 3$.

The fact that for $N = 2$ the lower bound on the entropy is zero should not be surprising: the only allowed code in this case is $\dots121212\dots$. This is showing the limitations of our current construction since while this is true in the case $\alpha=1$, one can observe that for $\alpha >1$ the scattering angle $\Delta\phi$ can be larger than $\pi$: trajectories are allowed to cross the line between the centers and the code above can denote two rather different types of trajectories, see figure \ref{fig:orbits}. This fact, as also noted by \cite{Con90, Assump_o_2018} in the case $\alpha=4$, is a good indicator of the possibility of chaotic scattering for $N=2$ and $\alpha >1$.
\begin{figure}[ht!]
    \centering
    \emph{Classical two-centers Kepler problem, $\alpha=1$}\\
    \includegraphics[width=0.8\linewidth]{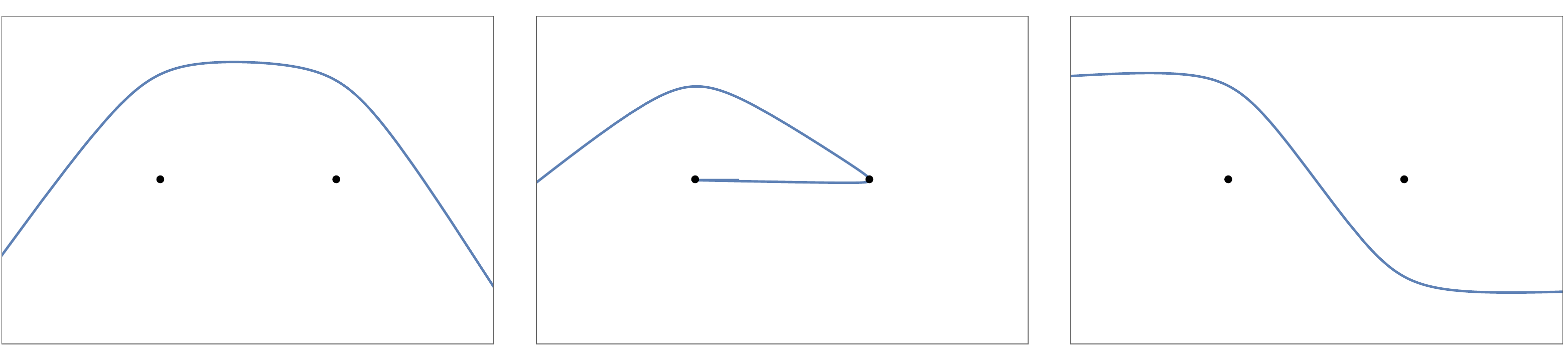}\\
    \emph{Relativistic two-centers problem with $\alpha=\frac32$}\\
    \includegraphics[width=0.8\linewidth]{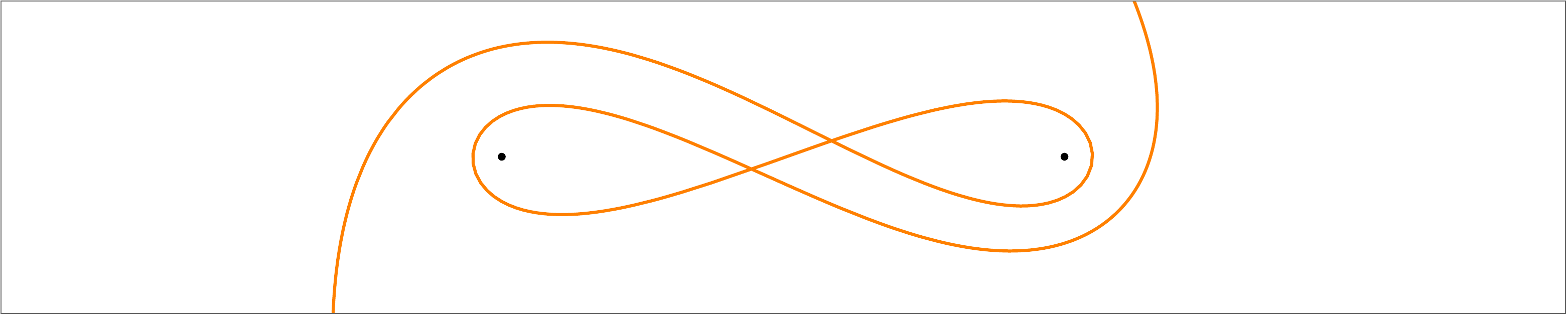}
    \caption{\textit{Upper panels: classification of the allowed scattering orbits for the classical two-centers Kepler problem ($\alpha=1$) for a fixed positive energy, from \cite{Seri_2015}. Lower panel: example of planar orbit with multiple crossings of the line connecting the centers for $\alpha = \frac{3}{2}$. In all plots the centers are marked with a black dot located at $(\pm 1,0)$.}}
    \label{fig:orbits}
\end{figure}

\section*{Conflict of interest}
On behalf of all authors, the corresponding author states that there is no conflict of interest.

\printbibliography

@article{EHT,
    author = "Akiyama, Kazunori and others",
    collaboration = "Event Horizon Telescope",
    title = "{First M87 Event Horizon Telescope Results. I. The Shadow of the Supermassive Black Hole}",
    eprint = "1906.11238",
    archivePrefix = "arXiv",
    primaryClass = "astro-ph.GA",
    doi = "10.3847/2041-8213/ab0ec7",
    journal = "Astrophys. J. Lett.",
    volume = "875",
    year = "2019"
}

@article{Kna99,
 author = {Knauf, Andreas},
 doi = {10.1070/RD1999v004n01ABEH000096},
 journal = {Regular \& Chaotic Dynamics},
 pages = {1-20},
 title = {{Qualitative} {Aspects} of {Classical} {Potential} {Scattering}},
 volume = {4},
 year = {1999}
}

@book{KK92,
	doi = {10.1007/978-3-540-47336-7},
	url = {https://doi.org/10.1007/978-3-540-47336-7},
	year = 1992,
	publisher = {Springer Berlin Heidelberg},
	author = {Markus Klein and Andreas Knauf},
	title = {Classical Planar Scattering by Coulombic Potentials}
}

@book{KnaufCM,
	doi = {10.1007/978-3-662-55774-7},
	url = {https://doi.org/10.1007/978-3-662-55774-7},
	year = 2018,
	publisher = {Springer Berlin Heidelberg},
	author = {Andreas Knauf},
	title = {Mathematical Physics: Classical Mechanics}
}

@article{HH72s,
	doi = {10.1007/bf01645696},
	url = {https://doi.org/10.1007/bf01645696},
	year = 1972,
	publisher = {Springer Science and Business Media {LLC}},
	volume = {26},
	number = {2},
	pages = {87--101},
	author = {J. B. Hartle and S. W. Hawking},
	title = {Solutions of the Einstein-Maxwell equations with many black holes},
	journal = {Communications in Mathematical Physics}
}

@book{WaldGR,
	title={General relativity},
	author={Wald, R. M.},
	year={2010},
	publisher={University of Chicago press},
	isbn={9780226870373}
}

@article{SD16,
	title={Binary black hole shadows, chaotic scattering and the Cantor set},
	author={Shipley, J. O. and Dolan, S. R.},
	journal={Classical and Quantum Gravity},
	volume={33},
	number={17},
	pages={175001},
	year={2016},
	publisher={IOP Publishing},
	url={https://doi.org/10.1088/0264-9381/33/17/175001},
	doi={10.1088/0264-9381/33/17/175001}
}

@article{TTC,
    doi = {10.1088/0264-9381/14/7/020},
	url = {https://doi.org/10.1088/0264-9381/14/7/020},
	year = 1997,
	publisher = {{IOP} Publishing},
	volume = {14},
	number = {7},
	pages = {1865--1881},
	author = {Neil J Cornish and Gary W Gibbons},
	title = {A tale of two centres},
	journal = {Classical and Quantum Gravity}
}

@article{KK08,
	title={The non-trapping degree of scattering},
	author={Knauf, A. and Krapf, M.},
	journal={Nonlinearity},
	volume={21},
	number={9},
	pages={2023},
	year={2008},
	publisher={IOP Publishing},
	url={https://doi.org/10.1088/0951-7715/21/9/005},
	doi={10.1088/0951-7715/21/9/005}
}

@article{Maj47,
	doi = {10.1103/physrev.72.390},
	url = {https://doi.org/10.1103/physrev.72.390},
	year = 1947,
	publisher = {American Physical Society ({APS})},
	volume = {72},
	number = {5},
	pages = {390--398},
	author = {Sudhansu Datta Majumdar},
	title = {A Class of Exact Solutions of Einstein{\textquotesingle}s Field Equations},
	journal = {Physical Review}
}

@article{Pap45,
	 ISSN = {00358975},
     URL = {http://www.jstor.org/stable/20488481},
     author = {A. Papapetrou},
     journal = {Proceedings of the Royal Irish Academy. Section A: Mathematical and Physical Sciences},
     pages = {191--204},
     publisher = {Royal Irish Academy},
     title = {A Static Solution of the Equations of the Gravitational Field for an Arbitary Charge-Distribution},
     volume = {51},
     year = {1945}
}

@article{Con90,
	title={Periodic orbits and chaos around two black holes},
	author={Contopoulos, G.},
	journal={Proceedings of the Royal Society of London. Series A: Mathematical and Physical Sciences},
	volume={431},
	number={1881},
	pages={183--202},
	year={1990},
	publisher={The Royal Society London},
	doi = {10.1098/rspa.1990.0126},
	url = {https://doi.org/10.1098/rspa.1990.0126}
}

@book{WigginsDS,
    doi = {10.1007/978-1-4757-4067-7},
	url = {https://doi.org/10.1007/978-1-4757-4067-7},
	year = 1990,
	publisher = {Springer New York},
	author = {Stephen Wiggins},
	title = {Introduction to Applied Nonlinear Dynamical Systems and Chaos}
}

@article{ravndal2004scalar,
	title={Scalar gravitation and extra dimensions},
	author={Ravndal, F.},
	journal={arXiv preprint},
	eprint = "0405030",
    archivePrefix = "arXiv",
    primaryClass = "gr-qc",
	year={2004}
}

@thesis{kluitenberg2021chaotic,
  title={Chaotic scattering in relativistic N-center problems},
  author={Kluitenberg, Martijn},
  year={2021},
  school={University of Groningen},
  url={https://fse.studenttheses.ub.rug.nl/25238/}
}

@article{Fejoz_2021,
	doi = {10.1088/1361-6544/ac288d},
	url = {https://doi.org/10.1088/1361-6544/ac288d},
	year = 2021,
	publisher = {{IOP} Publishing},
	volume = {34},
	number = {11},
	pages = {8017--8054},
	author = {Jacques Fejoz and Andreas Knauf and Richard Montgomery},
	title = {Classical n-body scattering with long-range potentials},
	journal = {Nonlinearity}
}

@article{Seri_2015,
	doi = {10.1063/1.4906068},
	url = {https://doi.org/10.1063/1.4906068},
	year = 2015,
	publisher = {{AIP} Publishing},
	volume = {56},
	number = {1},
	pages = {012902},
	author = {Marcello Seri},
	title = {The problem of two fixed centers: bifurcation diagram for positive energies},
	journal = {Journal of Mathematical Physics}
}

@article{Shiraishi_1993,
	doi = {10.1063/1.530167},
	url = {https://doi.org/10.1063/1.530167},
	year = 1993,
	publisher = {{AIP} Publishing},
	volume = {34},
	number = {4},
	pages = {1480--1486},
	author = {Kiyoshi Shiraishi},
	title = {Multicentered solution for maximally charged dilaton black holes in arbitrary dimensions},
	journal = {Journal of Mathematical Physics}
}

@article{Assump_o_2018,
	doi = {10.1103/physrevd.98.064036},
	url = {https://doi.org/10.1103/physrevd.98.064036},
	year = 2018,
	publisher = {American Physical Society ({APS})},
	volume = {98},
	number = {6},
	author = {Thiago Assump{\c{c}}{\~{a}}o and Vitor Cardoso and Akihiro Ishibashi and Maur{\'{\i}}cio Richartz and Miguel Zilh{\~{a}}o},
	title = {Black hole binaries: Ergoregions, photon surfaces, wave scattering, and quasinormal modes},
	journal = {Physical Review D}
}

\end{document}